\documentclass{article}
\usepackage{graphicx} 
\usepackage{amsmath, amssymb, amsthm}
\usepackage{hyperref}
\usepackage{xcolor}
\usepackage{comment}
\usepackage{tikz-cd}
\usepackage[left=3.4cm, right=3.4cm]{geometry}
\usepackage{xurl}

\usepackage{algorithm}
\usepackage{algpseudocode}

\newcommand{\F}{\mathbb{F}}
\newcommand{\Q}{\mathbb{Q}}

\newcommand{\calE}{\mathcal{E}}
\newcommand{\calR}{\mathcal{R}}

\DeclareMathOperator{\GRS}{GRS}
\DeclareMathOperator{\diag}{diag}
\DeclareMathOperator{\shortening}{Sh}
\DeclareMathOperator{\puncturing}{Pct}

\newtheorem{theorem}{Theorem}

\newtheorem{lemma}[theorem]{Lemma}

\newtheorem{corollary}[theorem]{Corollary}

\theoremstyle{definition}

\newtheorem{definition}[theorem]{Definition}
\theoremstyle{remark}
\newtheorem{remark}[theorem]{Remark}
\newtheorem{example}[theorem]{Example}

\title{An Attack on High Rate McEliece Cryptosystems Using Generalized Reed Solomon Codes with Weight $2$ Mask}
\author{Julia Lieb, Abhinaba Mazumder, Michael Schaller}
\date{\today}

\begin{document}

\maketitle

\begin{abstract}
    Due to the insecurity of McEliece cryptosystems instantiated with Generalized Reed-Solomon codes, there have been several proposals of McEliece type systems that replace the permutation matrix by a matrix $M$ with larger row and column weight.
    In many of them, the secret key is still a GRS code.
    There have been successful attacks on some of those schemes with row and column weight between $1$ and $1 + R$, where $R$ is the rate of the code. The case of weight two and larger has been left open in these works.
    Subsequently, several authors proposed schemes with weight exactly two and with even higher weight.
    We provide distinguishers for the public codes appearing in these cryptosystems in the high rate regime.
    In addition, we give a framework to turn a good enough distinguisher into a key-recovery attack.
    In the case where the matrix $M$ has row and column weight $2$, we can successfully attack the scheme in the high rate regime using a cube code distinguisher.
\end{abstract}

\section{Introduction}

After Sidelnikov and Shestakov \cite{sidelnikov1992insecurity} broke the McEliece cryptosystem instantiated with generalized Reed-Solomon (GRS) codes \cite{niederreiter1986knapsack}, there have been proposals to modify the cryptosystem in order to still use GRS codes for the private key \cite{almeida2024convolutional, baldi2011variant, baldi2016enhanced, weger16}.
The modifications are mainly based on using as a mask a matrix with higher row and column weight instead of a permutation matrix in the McEliece cryptosystem with GRS codes.
The schemes \cite{baldi2011variant, baldi2016enhanced} which use row and column weight between $1$ and $1 + R$ where $R$ is the rate of the code have been successfully attacked in \cite{couvreur2014distinguisher, couvreur2015polynomial}.
The case of row and column weight equal to $2$, which is used in \cite{weger16, bolkema2017variations, almeida2024convolutional}, has been left open in \cite{couvreur2014distinguisher, couvreur2015polynomial}.
A similar scheme with higher weight was proposed in \cite{Dyseryn26}.
We provide a framework that shows that having access to a certain distinguisher, we are able to attack the McEliece cryptosystem with GRS codes also when a mask with row and column weight two is used as proposed in \cite{weger16}.
Under certain heuristic assumptions, we obtain that the cube code can serve as such a distinguisher for high enough code rates.
Using shortening, one can improve the range of the distinguisher to lower rates.
However, extending it to all rates remains an open problem.
In the high rate regime without shortening where $n > 2 k^2 -4 k + 4$, where $k$ is the dimension of the dual code, we are able to turn the distinguisher into a key-recovery attack.

The distinguisher only uses column weight equal to $2$, while key recovery becomes a bit more complicated if only the column weight is $2$ without the row weight also being $2$.
For this reason, we only claim an attack on \cite{weger16} in the high rate regime.
However, it seems likely that it is possible to adapt the attack to the case of only column weight equal to $2$ and one can certainly recover some information about the private key of \cite{almeida2024convolutional, bolkema2017variations} in the high rate regime.

The basic idea for the attack in the high rate regime is similar to the ideas in \cite{couvreur2014distinguisher, couvreur2015polynomial} in the sense that we use a distinguisher in order to recover information about the private key.
We are given a public code of the form 
\[
    \GRS_k(P, \mu) M
\]
where the evaluation points and column multipliers of the GRS code are secret and $M$ is a secret invertible matrix with row and column weight $2$.
In order to recover the mask $M$, we multiply a generator matrix of the public code $\GRS_k(P, \mu)  M$ on the right with a matrix $N$ that corresponds to an elementary column operation adding the multiple of one column to another column.
In this way we get the code 
\[
    \GRS_k(P, \mu) M N.
\]
We apply a distinguisher to the code  $\GRS_k(P, \mu) M N$ to find among the matrices $N$ those for which the number of non-zero entries in $M N$ does not exceed the number of non-zero entries in $M$.
Since all columns of $M$ are of weight $2$, this corresponds to finding cancellation patterns when performing column operations on $M$.
This information about the mask $M$ allows us to recover all of $M$ (or rather a matrix which is in some sense equivalent) so that we can apply the Sidelnikov-Shestakov \cite{sidelnikov1992insecurity} attack or a filtration attack \cite{couvreur2014distinguisher} to recover a GRS code with which we can decode the encrypted message.

The paper is organized as follows.
In Section \ref{sec:basics} we recall the necessary background from coding theory.
In Section \ref{sec:Desc_McEliece_GRS_Weight_Two} we describe the McEliece cryptosystem with GRS codes and weight $2$ mask.
In Section \ref{sec:cube_code_distinguisher} we show that for high enough rate the cube code provides a distinguisher for the public codes in the cryptosystem from random codes.
This gives evidence that we can use the cube code as the distinguisher for cancellation as described in the introduction.
We also show an improvement on the distinguisher using shortening.
Section \ref{sec:framework_distinguishing_key_recovery}
is at the heart of the paper.
First we describe the framework showing how a distinguisher as described in the introduction leads to a key-recovery attack.
Then we show how we can instantiate it in the high rate regime with the cube code.
We conclude this section with the experiments of our attack.
In Section \ref{sec:outlook} we describe several interesting open problems.

\section{Basics of Linear Codes}
\label{sec:basics}

\begin{definition}
    Let $v = (v_1, \ldots, v_n), w = (w_1, \ldots, w_n) \in \F_q^n$.
    We define the \textbf{standard inner product} between $v$ and $w$ as
    \[
        \langle v, w \rangle := \sum_{i=1}^n v_i w_i.
    \]
\end{definition}

\begin{definition}
Let $k \leq n$ be positive integers and let $C$ be an $[n,k]_q$ linear code over $\F_q$, i.e., $C$ is a $k$-dimensional subspace of $\mathbb F_q^n$. Then, the linear $[n,n-k]_q$ code
$$C^\perp := \{ x \in \F_q^n \mid \langle c, x \rangle = 0 \ \forall \  c\in C \}$$
is called the \textbf{dual code} of $C$.
\end{definition}

\begin{definition}\label{def:reedsolomon}
Let  $k \leq n \leq q$ be positive integers. Let $P=(P_1, \ldots, P_n) \in \F_q^n$ with $P_i \neq P_j,$ for all $i \neq j \in \{1, \ldots, n\}$. Let $\mu=(\mu_1, \ldots, \mu_n)\in \F_q^n$ with $ \mu_i \neq 0$ for all $i \in \{1, \ldots, n\}.$  Then,
\begin{equation*}
\GRS_{k}(P,\mu)= \left\{ (\mu_1 f(P_1), \ldots, \mu_n f(P_n)) \bigm| f \in \F_q[x], \ \text{deg}(f) <  k \right\}
\end{equation*}
is called a \textbf{Generalized Reed-Solomon (GRS)} code of length $n$ and dimension $k$ over $\mathbb F_q$.
\end{definition}

\begin{lemma}[\cite{pellikaan2017codes}, Prop. 5.1.26]
\label{lemma:dual_grs}
Let $k \leq n\leq q$ be positive integers. Then,\begin{equation*}
\GRS_{k}(P,\mu)^{\perp} = \GRS_{n-k}(P,\tilde{\mu}),
\end{equation*}
where 
\begin{equation*}
\tilde{\mu}_i =  \mu_i^{-1}  \prod_{\substack{ j= 1 \\ j \neq i}}^n(P_i-P_j)^{-1}.
\end{equation*}    
\end{lemma}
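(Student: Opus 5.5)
Since $\tilde{\mu}_i \neq 0$ for all $i$, the code $\GRS_{n-k}(P,\tilde{\mu})$ is a GRS code of dimension $n-k$, and $\GRS_k(P,\mu)^\perp$ has dimension $n-k$ as well. It therefore suffices to prove the inclusion $\GRS_{n-k}(P,\tilde{\mu}) \subseteq \GRS_k(P,\mu)^\perp$; equality then follows from the dimension count.

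For the inclusion, take $f$ with $\deg f < k$ and $g$ with $\deg g < n-k$. By the definition of $\tilde{\mu}$, the multipliers $\mu_i$ cancel and
\[
\langle (\mu_i f(P_i))_i, (\tilde{\mu}_i g(P_i))_i \rangle = \sum_{i=1}^n \frac{h(P_i)}{\prod_{j \neq i}(P_i - P_j)}, \qquad h := fg .
\]
Here $\deg h \le (k-1)+(n-k-1) = n-2$. By bilinearity, it is enough to show that this sum vanishes for every polynomial $h$ with $\deg h \le n-2$.

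The key step is Lagrange interpolation. Because the $P_i$ are distinct, any $h$ with $\deg h \le n-1$ satisfies
\[
h(x) = \sum_{i=1}^n h(P_i) \prod_{j \neq i} \frac{x - P_j}{P_i - P_j}.
\]
Each product has leading coefficient $1/\prod_{j\neq i}(P_i-P_j)$, so comparing coefficients of $x^{n-1}$ shows that the sum above equals the coefficient of $x^{n-1}$ in $h$. When $\deg h \le n-2$, this coefficient is $0$, which proves the inclusion.

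The only point needing care is this identification of the sum with the top coefficient of $h$, which is the interpolation identity above. Everything else (nonvanishing of $\tilde{\mu}$, the degree bound, and the dimension count) is routine.
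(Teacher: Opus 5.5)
Your proof is correct and complete. The pieces all hold: the dimension count (which uses that the $P_i$ are distinct and $\tilde{\mu}_i\neq 0$), the cancellation of the $\mu_i$, the bound $\deg(fg)\le n-2$, and reading off the $x^{n-1}$-coefficient from the Lagrange interpolation formula. The degenerate case $k=n$, where $\GRS_0$ is the zero code, is covered as well.

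The paper gives no proof of its own here; it only cites Pellikaan et al.\ (Prop.~5.1.26). Your interpolation argument is the standard self-contained proof of that fact.

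The other common route expands a Vandermonde determinant with a repeated row along its last row. It proves the same identity $\sum_i P_i^m/\prod_{j\neq i}(P_i-P_j)=0$ for $m\le n-2$. Its advantage is that it explains where $\tilde{\mu}$ comes from: $\tilde{\mu}$ is the cofactor kernel vector of the $(n-1)\times n$ Vandermonde matrix. Your version is shorter.
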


In the following, we introduce power codes. The special case of square codes has become prominent as a distinguisher in GRS code based McEliece type cryptosystems; see e.g. \cite{couvreur2014distinguisher, couvreur2015polynomial}.

\begin{definition}
Let $v=(v_1, \ldots, v_n), w=(w_1, \ldots, w_n) \in \mathbb{F}_q^n$. The \textbf{Schur product} of $v$ and $w$ is defined as
$$ v\ast w := ( v_1 w_1, \ldots, v_n w_n ).$$
Let $C_1,C_2 \subset \mathbb{F}_q^n$ be two linear codes.
The \textbf{Schur product} of $C_1$ and $C_2$ is defined as 
$$ C_1 \ast C_2 := \langle \{ c_1 \ast c_2 \ | \ c_1 \in C_1,\ c_2 \in C_2 \} \rangle \subset \mathbb{F}_q^n.$$
For a linear code $C \subset \mathbb{F}_q^n$, we call $C * C$ the \textbf{square code} of $C$ and denote it by $C^{(2)}$. 
More generally, we call $C^{(w)}:=\underbrace{C\ast\cdots\ast C}_{w\ \text{times}}$ the \textbf{$w$-th order power code of $C$}. In particular, we call $C^{(3)}$ the \textbf{cube code} of $C$.
\end{definition}

The reason why the square code can serve as a distinguisher for GRS code based McEliece cryptosystems where the mask has row and column weight at most $1+R$ where $R$ is the code rate, is that the dimension of the square code of a GRS code is smaller than the dimension of the square code of a random code.
It is easy to see that 
$$\dim(GRS_k(P,\mu)^{(2)})= \min\left\{2k-1,n\right\}. $$
The following theorem provides the dimension of the square code of a random code.

\begin{theorem}[\text{\cite[Theorem 2.3]{cascudo2015squares}}]
For a random linear code $C \subset \mathbb{F}_q^n$ of dimension $k$, one has with high probability that
$$\dim(C^{(2)})= \min\left\{\binom{k+1}{2},n\right\}. $$
\end{theorem}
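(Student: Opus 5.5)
The statement concerns the square of a uniformly random $[n,k]_q$ code. Fix a generator matrix $G$ with rows $g_1,\dots,g_k$. Then $C^{(2)}$ is spanned by the $\binom{k+1}{2}$ vectors $g_i * g_j$ with $i \le j$. Hence $\dim C^{(2)} \le \min\{\binom{k+1}{2}, n\}$ always holds, and the work is the lower bound. I would recast it algebraically. Let column $t$ of $G$ be $x_t \in \F_q^k$, and let $S$ be the space of symmetric bilinear forms (quadratic forms) on $\F_q^k$, which has dimension $\binom{k+1}{2}$. Consider the evaluation map
\[
\mathrm{ev}: S \to \F_q^n, \qquad Q \mapsto (Q(x_1), \ldots, Q(x_n)).
\]
Its image is exactly $C^{(2)}$. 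So $\dim C^{(2)} = \binom{k+1}{2} - \dim \ker(\mathrm{ev})$, and the statement says that with high probability the kernel is as small as possible.

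Model $G$ as uniform among $k \times n$ matrices; rank deficiency has negligible probability. The columns $x_1,\dots,x_n$ are then i.i.d. uniform in $\F_q^k$. I would split into two regimes.

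\textbf{Regime $n \ge \binom{k+1}{2} + \omega(1)$.} Here the target is $\ker(\mathrm{ev}) = 0$. I would use a first-moment bound. For a fixed nonzero quadratic form $Q$, the Schwartz--Zippel lemma (or Chevalley--Warning-type counting) gives $\Pr[Q(x)=0] \le 1/q + O(q^{-1})$, in fact $\le 2/q$. A refinement is better: for forms of rank $r$, $\Pr[Q(x)=0] = 1/q + O(q^{-r/2})$. The expected number of nonzero $Q$ (up to scalars) vanishing on all $n$ columns is then
\[
\sum_{Q} \Pr[Q(x)=0]^n .
\]
Stratify this sum by the rank $r$ of $Q$. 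The number of forms of rank $r$ is about $q^{rk - r(r-1)/2}$. Forms of full rank contribute roughly $q^{\binom{k+1}{2}} q^{-n}$, which is small exactly when $n$ exceeds $\binom{k+1}{2}$ by a growing margin. Low-rank forms are fewer, but their vanishing probability is larger (for rank $1$ it is exactly $1/q$; for rank $2$ hyperbolic forms it is about $2/q$). I would check that $q^{rk}(2/q)^n \to 0$ for $r \ll k$, using $n \gg k$.

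\textbf{Regime $n \le \binom{k+1}{2}$.} Here the target is $\mathrm{rank}(\mathrm{ev}) = n$, i.e. the $n$ vectors $v_t = (x_{t,i}x_{t,j})_{i\le j}$ are linearly independent. Equivalently, no nonzero coefficient vector $\lambda$ satisfies $\sum_t \lambda_t\, x_t x_t^{T} = 0$. I would reveal the columns sequentially. Column $x_t$ is bad if $v_t$ lies in the span $W_{t-1}$ of the previous ones, where $\dim W_{t-1} = t-1 < \binom{k+1}{2}$. Then $W_{t-1}^{\perp}$ contains a nonzero quadratic form $Q$, and $\Pr[v_t \in W_{t-1}] \le \Pr[Q(x_t) = 0]$. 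This bound is only about $1/q$ per step, which is too weak for a union bound over $n$ steps when $q$ is fixed. The refinement is to use all of $W_{t-1}^{\perp}$: its dimension is $d = \binom{k+1}{2} - t + 1$, and $x_t$ must be a common zero of $d$ independent quadrics. I would bound this probability by roughly $q^{-\min(d,\,\cdot)}$ plus low-rank corrections, then sum over $t$. The sum is dominated by the last steps with small $d$, which gives the threshold behaviour.

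\textbf{Main obstacle.} The hardest step is controlling the contribution of low-rank quadratic forms, especially rank $1$ ($Q = \ell^2$) and rank $2$. Their zero sets are hyperplanes or unions of two hyperplanes, so they have large probability. This matters mostly near $n \approx \binom{k+1}{2}$ and for small $q$, and I expect the delicate counting here. I would first try to handle them separately: a rank-$1$ or rank-$2$ form vanishing on many columns forces those columns into one or two hyperplanes. That event has probability at most $q^{2k}(2/q)^n$, which is negligible because $n \ge 2k + \omega(1)$ in the relevant range; the remaining cases need more care. The "with high probability" qualifier should be read as holding for $k$ large with $\binom{k+1}{2}$ and $n$ not too close, or as $q$ grows. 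Matching the precise form in \cite[Theorem 2.3]{cascudo2015squares} would be the final check.
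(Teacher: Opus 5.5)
The paper does not prove this statement; it quotes it from \cite{cascudo2015squares}, so I assess your outline on its own terms. Your regime $n\ge\binom{k+1}{2}+\omega(1)$ is sound in outline. The first-moment bound $\sum_{Q\neq 0}\Pr[Q(x)=0]^n$ goes through, stratified by rank with $N_r\approx q^{rk-\binom{r}{2}}$ forms of rank $r$ (so $N_r q^{-n}\lesssim q^{-\binom{k-r+1}{2}}$) and $\Pr[Q(x)=0]\le q^{-1}+q^{-r/2}$. Two small repairs are needed. First, the domain of $\mathrm{ev}$ must be the space of homogeneous quadratic \emph{polynomials}: in characteristic $2$ the map $x\mapsto B(x,x)$ for a symmetric bilinear $B$ loses all cross terms. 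Second, the bound $2/q$ is vacuous for $q=2$, so use $(2q-1)/q^2$, which is valid for every nonzero quadratic form.

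The genuine gap is the regime $n\le\binom{k+1}{2}$. Your key step says a uniform point is a common zero of the $d$ independent quadrics spanning $W_{t-1}^{\perp}$ with probability roughly $q^{-\min(d,\cdot)}$, up to low-rank corrections. This is false for general linear systems. For instance, $\ell X_1,\dots,\ell X_k$ are $k$ independent quadrics vanishing on the whole hyperplane $\ell=0$. More generally, the quadrics through a subspace of codimension $c$ form a system of dimension $d=ck-\binom{c}{2}$ whose zero set has density at least $q^{-c}$. For $c\ge k/2$ this system contains nondegenerate forms, so it is not a low-rank effect. To rule this out for the random system of quadrics through $x_1,\dots,x_{t-1}$, you must bound the expected density of its base locus. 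That density is exactly $\Pr[v(x_t)\in\mathrm{span}(v(x_1),\dots,v(x_{t-1}))]$ with $v(y)=(y_iy_j)_{i\le j}$. So you are back to counting linear dependencies among the $v_t$, which is the theorem itself. Your ``main obstacle'' paragraph about rank $1$ and $2$ forms concerns the first regime, not this one.

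The missing idea is duality, which makes the sequential argument unnecessary. With $N=\binom{k+1}{2}$, rank--nullity gives $|(C^{(2)})^{\perp}|=q^{\,n-N}\,|\ker\mathrm{ev}|$ deterministically. Hence, with $p_Q=\Pr[Q(x)=0]$,
\[
\Pr\bigl[C^{(2)}\neq\F_q^n\bigr]\le \mathbb{E}\,\bigl|(C^{(2)})^{\perp}\bigr|-1\le q^{\,n-N}+q^{-N}\sum_{Q\neq 0}\bigl((q\,p_Q)^n-1\bigr).
\]
The last sum is the same rank-stratified quantity as in your first regime, now needed to relative precision $o(1)$:
\begin{itemize}
\item One has $q\,p_Q=1$ for odd rank and $q\,p_Q\le 1+q^{1-r/2}$ for even rank $r$.
\item Above a threshold $r_0=O(\log_q k)$, the terms $(q\,p_Q)^n-1$ are therefore $o(1)$, because $n\le N\approx k^2/2$.
\item For $r<r_0$, the weight $N_r/q^N\approx q^{-\binom{k-r+1}{2}}$ beats $(2-1/q)^n$.
\end{itemize}
Equivalently, you can union-bound over $z\neq 0$ with $G\,\diag(z)\,G^T=0$. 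This is the kind of counting (numbers of quadratic forms of each rank and of their zeros) underlying \cite{cascudo2015squares}. The result is a failure probability of $q^{-(N-n)}+o(1)$. This also confirms your caveat that one needs $|N-n|\to\infty$ (or $q\to\infty$).
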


In this paper, we will employ the cube code as a distinguisher for GRS codes with a weight two mask.
It is conjectured that the dimension of the cube code of a random code of dimension $k$ and length $n$ is with high probability equal to $$\min\left\{\binom{k+2}{3},n\right\}.$$
To the best of our knowledge, a proof of this result is still missing.

In Section \ref{sec:cube_code_distinguisher}, we will show that the dimension of the cube code of a weight two masked GRS code of dimension $k$ and length $n$ is upper bounded by $$\min\{n, 2 k^2 - 4k + 4\}.$$

Later in the paper we will also need the following lemma about power codes of sums of codes, which is easy to see.

\begin{lemma}\label{lem:power_codes}
  Let $m,w,n\in\mathbb N$ and let $C_1,\hdots,C_m\subset\mathbb F_q^n$ be linear codes. Then,
$$    \left(\sum_{i=1}^m C_i\right)^{(w)}\subseteq\sum_{i_1,\hdots,i_w\in\{1,\hdots,m\}}C_{i_1}\ast\cdots\ast C_{i_w}$$
\end{lemma}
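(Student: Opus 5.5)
The plan is to reduce the statement to multilinearity of the Schur product on generators. Set $C := \sum_{i=1}^m C_i$ and $D := \sum_{i_1,\hdots,i_w\in\{1,\hdots,m\}} C_{i_1}\ast\cdots\ast C_{i_w}$. Since $D$ is a linear subspace of $\mathbb F_q^n$ and $C^{(w)}$ is defined as the span of all products $c_1\ast\cdots\ast c_w$ with $c_j\in C$, it suffices to show that every such product lies in $D$.

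First I will record that the componentwise product $(v_1,\hdots,v_w)\mapsto v_1\ast\cdots\ast v_w$ is $\mathbb F_q$-multilinear. This is immediate because the product is computed coordinatewise, and multiplication in $\mathbb F_q$ distributes over addition.

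Next, I will write each factor as $c_j = \sum_{i=1}^m c_j^{(i)}$ with $c_j^{(i)}\in C_i$, which is possible by the definition of a sum of subspaces. Expanding by multilinearity gives
\[
c_1\ast\cdots\ast c_w=\sum_{i_1,\hdots,i_w\in\{1,\hdots,m\}} c_1^{(i_1)}\ast\cdots\ast c_w^{(i_w)}.
\]
Each summand is a product of one element from each of $C_{i_1},\hdots,C_{i_w}$, so it lies in $C_{i_1}\ast\cdots\ast C_{i_w}$. Note that the iterated Schur product of codes contains all such elementwise products by definition, and that this holds for any bracketing, since $\ast$ is associative on vectors. The sum of these summands therefore lies in $D$.

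Finally, because $D$ is closed under linear combinations and contains every generator of $C^{(w)}$, we conclude $C^{(w)}\subseteq D$. There is no real obstacle here. The only point needing care is the bookkeeping of the iterated Schur product of codes: one should check, for instance by induction on $w$, that $C_{i_1}\ast\cdots\ast C_{i_w}$ contains all elementwise products $x_1\ast\cdots\ast x_w$ with $x_j\in C_{i_j}$.
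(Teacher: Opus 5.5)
Your proof is correct and is the standard multilinear expansion the paper has in mind when it calls the lemma ``easy to see''; the paper gives no written proof. One piece of bookkeeping is treated as a definition rather than checked: that the iterated product $C\ast\cdots\ast C$ is spanned by the $w$-fold products $c_1\ast\cdots\ast c_w$ with $c_j\in C$. This follows by the same induction on $w$, using bilinearity of $\ast$ to pass from generators of $C^{(w-1)}$ to generators of $C^{(w)}$.
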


\begin{definition}
    Let $C$ be an $[n, k]_q$ linear code.
    Let $J \subseteq \{1, \ldots, n\}$.
    We define the \textbf{punctured code} of $C$ in the coordinates of $J$ to be
    \[
        \puncturing_J(C) := \{(c_i)_{i \notin J} \bigm| c \in C\}.
    \]
\end{definition}

\begin{definition}
    Let $C$ be an $[n,k]_q$ linear code and let $J \subseteq \{1, \ldots, n\}$.
    We define the code
    \begin{equation}
    \label{eq:code_of_zeros_J}
        C(J) := \{c \in C: c_i = 0 \text{ for all } i \in J\}.
    \end{equation}
    We define the \textbf{shortened code} of $C$ in the coordinates of $J$ to be
    \[
        \shortening_J(C) := \puncturing_J (C(J)).
    \]
\end{definition}

\section{Description of the McEliece Cryptosystem with GRS Codes and Weight Two Mask}
\label{sec:Desc_McEliece_GRS_Weight_Two}

In the following, we describe the setup of the weight two masked McEliece cryptosystem with GRS codes \cite{weger16}.
Note that we describe the codes here in a slightly different manner using duals and transposes to make the exposition simpler.
Both representations can be derived and computed in a simple manner from each other using Lemma \ref{lemma:dual_grs}.

The private key consists of the following data:

\begin{itemize}
    \item A secret GRS code $C_{sec} = \GRS_k(P, \mu)^\perp$ over $\F_q$ of length $n$, dimension $n - k$, the dual of a GRS code with vector of evaluation points $P$ and vector of column multipliers $\mu$ with an efficient decoding algorithm and generator matrix $G_{sec}\in\mathbb F_q^{(n - k)\times n}$.
    \item An $(n - k) \times (n - k)$ change of basis matrix $S$ over $\mathbb F_q$.
    \item An invertible $n \times n$ matrix $M$ over $\mathbb F_q$ of row and column weight two, i.e., a matrix in which each row and column has exactly two non-zero entries.
\end{itemize}
The public key then consists of the matrix $G_{pub} = S G_{sec} (M^T)^{-1}$.

Next, we explain encryption and decryption.
Let $t_{GRS} = \left\lfloor \frac{k}{2} \right\rfloor$ be the error correction capability of the GRS code $C_{sec}$.
We set
\[
    t = \left\lfloor \frac{t_{GRS}}{2} \right\rfloor.
\]
Now, in order to encrypt, the sender who wants to send message $m \in \F_q^{n-k}$ sends
\[
    c = m G_{pub} + e
\]
where $e$ is a random vector of weight $t$.

The receiver of the ciphertext $c$ decrypts as follows.
She multiplies $c$ by $M^{T}$ to get $c M^{T} = m S G_{sec} + eM^{T}$ with the weight of $e M^{T}$ being at most $t_{GRS}$.
Therefore, she can decode using the GRS decoder for $C_{sec}$ to get $mS$ and multiplying that by $S^{-1}$ yields the original message $m$.

To switch between a weight two masked code and its dual, we use the following well-known lemma.
\begin{lemma}
    Let $C$ be an $[n, n - k]_q$-code.
    Let $D = C (M^T)^{-1}$ for an $n \times n$ invertible matrix $M$.
    Then we have
    \[
        D^\perp = C^\perp M.
    \]  
\end{lemma}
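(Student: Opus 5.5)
The plan is to reduce the claim to the defining orthogonality property of the dual code. I will verify the inclusion $C^\perp M \subseteq D^\perp$ and then compare dimensions. Since $M$ is invertible, the map $x \mapsto xM$ is a bijective linear map on $\F_q^n$. Hence $\dim(C^\perp M) = \dim C^\perp = k$. Likewise $D = C(M^T)^{-1}$ has dimension $n-k$, so $\dim D^\perp = n - (n-k) = k$. It therefore suffices to prove the inclusion.

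For the inclusion, take an arbitrary element of $D$. It has the form $c (M^T)^{-1}$ with $c \in C$. An arbitrary element of $C^\perp M$ has the form $xM$ with $x \in C^\perp$. I write the standard inner product as a matrix product of row vectors, $\langle v, w\rangle = v w^T$. Then
\[
    \langle c (M^T)^{-1}, xM \rangle = c (M^T)^{-1} (xM)^T = c (M^T)^{-1} M^T x^T = c x^T = \langle c, x \rangle = 0,
\]
where the last equality holds because $x \in C^\perp$. So every element of $C^\perp M$ is orthogonal to every element of $D$, i.e.\ $C^\perp M \subseteq D^\perp$. Equality of dimensions then gives $D^\perp = C^\perp M$.

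There is no real obstacle here. The only point requiring care is keeping the transposes straight: one needs $(xM)^T = M^T x^T$ so that it cancels against $(M^T)^{-1}$. One also needs invertibility of $M$ to guarantee that the dimension is preserved. Note that no property specific to weight two or to GRS codes is used, so the lemma holds for any invertible $M$.
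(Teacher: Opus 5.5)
Your proof is correct and essentially identical to the paper's: the same orthogonality computation gives $C^\perp M \subseteq D^\perp$, and the same dimension count gives equality. You are slightly more explicit than the paper in noting that $\dim(C^\perp M) = \dim C^\perp$ because $M$ is invertible, which the paper leaves implicit.
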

\begin{proof}
    Let $v^\perp = c^\perp M$ for $c^\perp \in C^\perp$ and let $v = c (M^T)^{-1}$ for $c \in C$.
    Then we have
    \[
        v^\perp v^T = c^\perp M ((M^T)^{-1})^T c^T = c^\perp c^T = 0.
    \]
    Hence $v^\perp \in D^\perp$ which yields the inclusion $C^\perp M \subseteq D^\perp$.
    Since the dimensions of $D^\perp$ and $C^\perp$ are the same, equality follows.
\end{proof}

Applying the previous lemma to $C = \GRS_k(P, \mu)^\perp$ and $D$ the code generated by $G_{pub}$ one obtains
\[
    D^\perp = \GRS_k(P, \mu) M.
\]
Here, $M$ is a matrix where all rows and columns have weight exactly two.
From now on we will only work with this code $\GRS_k(P, \mu) M$.
If we can recover a GRS code and a weight two mask from this code, we can also recover the information for the duals.

In the following, for $v=(v_1,\hdots,v_n)\in\mathbb F_q^n$, we define 
$$\diag(v)=\begin{pmatrix}
    v_1 &  & 0\\
    & \ddots & \\
    0 & & v_n
\end{pmatrix}.$$
With this, we can write 
\begin{align}\label{P}
M=\Pi_1 (\diag(\alpha)+\Pi \cdot \diag(\beta))
\end{align}
with $n\times n$ permutation matrices $\Pi_1$, $\Pi$ and $\alpha=(\alpha_1,\hdots, \alpha_n), \beta=(\beta_1,\hdots, \beta_n)\in (\mathbb F_q^\ast)^n$,
where the permutation corresponding to the permutation matrix $\Pi$ has no fixed points.
By abuse of notation, we write $\Pi$ for both the permutation matrix and the corresponding permutation.
Due to the equation
\begin{equation}
\label{eq:monomial_change}
    \GRS_k(P, \mu) M = \GRS_k(P, \mu) A^{-1} A M
\end{equation}
which holds for every monomial matrix $A$, we can work with the equivalent GRS code $\hat{C}=\GRS_k(P, \mu) A^{-1}$ and the corresponding weight two mask $\hat{M}=A M$ instead.
In the following, we will use this several times for different monomial matrices $A$.
By abuse of notation, we will call the resulting codes and matrices again $\hat{C}$ and $\hat{M}$ respectively.
It will be clear from the context what $\hat{C}$ and $\hat{M}$ will be.

First, we use \eqref{eq:monomial_change} with $A=\Pi_1^{-1}$ to obtain
\begin{equation}
\label{eq:mask_adding_i_and_pi_i}
    \hat{M}=\diag(\alpha)+\Pi \cdot \diag(\beta).
\end{equation}

Let $f\in\mathbb F_q[x]_{<k}$ represent a codeword $(\mu_1f(P_1),\hdots,\mu_nf(P_n))$ of $\GRS_k(P, \mu)$.
By abuse of notation we set 
$$(\mu_1f(P_1),\hdots,\mu_nf(P_n)):=(\mu_1f(P_1),\hdots,\mu_nf(P_n))\Pi_1\in\GRS_k(P, \mu)\Pi_1=\hat{C}.$$

We apply \eqref{eq:monomial_change} a second time, now with $A=\diag (\mu)$ to obtain that 
$$\hat{C}=\GRS_k(P, \mu)\Pi_1 \diag(\mu)^{-1}$$
is a Reed-Solomon code with evaluation points $P_1,\hdots, P_n$ reordered according to $\Pi_1$.
By abuse of notation we call them $P_1, \ldots, P_n$ again.
Moreover, we get
$$\hat{M}=\diag(\mu\ast \alpha)+\diag(\mu)\cdot \Pi\cdot \diag(\beta) .$$
In order to be able to work with a representation for the codewords of the public code that is as easy as possible, we consider in the following the equivalent code
$$\GRS_k(P, \mu)M \diag(\mu\ast\alpha)^{-1}=\hat{C}\hat{M} \diag(\mu\ast\alpha)^{-1},$$
whose codewords are of the form
\begin{equation}\label{eq:codewords_equivalent_code}
    (f(P_1)+\nu_1f(P_{\Pi(1)}),\ldots, f(P_n)+\nu_n f(P_{\Pi(n)}))
\end{equation}
for some $\nu\in (\mathbb F_q^\ast)^n$.
In this way, we can describe codewords in $\GRS_k(P, \mu) M \diag(\mu\ast\alpha)^{-1}$ as evaluations of polynomials $f(x_1) + yf(x_2) \in \F_q[x_1, x_2, y]$.
Note that for any invertible diagonal matrix $D\in\mathbb F_q^{n\times n}$, one has 
$$(C\cdot D)^{(3)}=C^{(3)}\cdot D^3,$$
i.e. the cube codes of $\GRS_k(P, \mu) M \diag(\mu\ast\alpha)^{-1}$ and 
$\GRS_k(P, \mu) M $ have the same dimension.

\begin{remark}
    We use the representation in terms of an equivalent code only to simplify theoretical arguments.
    For the actual attack we will never have to compute this equivalent code.
\end{remark}

This will be used in the next section to show that the cube code of such codes has smaller dimension than the cube code of a random code with the same parameters if the dimension $k$ is sufficiently small.

\section{Cube Code Distinguisher}
\label{sec:cube_code_distinguisher}

In order to employ the cube code as a distinguisher for the public code in the McEliece cryptosystem with GRS codes and weight two mask, we use the model of the public code as a certain evaluation code as described at the end of the previous section and we show that such evaluation codes can be distinguished using cube codes or higher-order power codes.

Recall that we consider a public code of the form
\[
    \GRS_k(P, \mu) M
\]
where $M$ is a matrix with row and column weight $2$.
We define the following space of polynomials in three variables:
\begin{align}
\label{eq:def_S_k}
    S_k = \{f(x_1) + y f(x_2) \in \F_q[x_1, x_2,  y]: f \in \F_q[x]_{<k} \}.
\end{align}
Note that we take the same polynomial $f$ in both terms.
$S_k$ is a $k$-dimensional vector space over $\mathbb F_q$ with basis 
\begin{align}\label{basis}
    \{1 + y, x_1 + y x_2, \ldots,  x_1^{k-1} + y x_2^{k-1}\}.
\end{align}
Evaluating the polynomials in $S_k$ at the $n$ points
\[
    (P_1, P_{\Pi(1)}, \nu_1), \ldots, (P_n, P_{\Pi(n)}, \nu_n)\in\mathbb F_q^3
\]
gives exactly the code $\GRS_k(P, \mu) M \diag(\mu\ast\alpha)^{-1}$ as described above; see \eqref{eq:codewords_equivalent_code}.
In order to show that we can use the cube code as distinguisher for such codes, we give upper bounds on the dimension of the space $S_k^{(3)}$ over $\mathbb F_q$, which is the linear space generated by three products in $S_k$.
Using the basis for $S_k$ given in \eqref{basis}, we consider the following products of three arbitrary basis elements:
\begin{align}\label{eq:cube}
    &(x_1^i + y x_2^i) ( x_1^j + y x_2^j) ( x_1^\ell + y x_2^\ell)\nonumber\\
    &= ( x_1^{i+j} +  y x_1^i x_2^j + y x_1^j x_2^i + y^2 x_2^{i+j}) ( x_1^\ell + y x_2^\ell)\nonumber\\
    &= x_1^{i+j+\ell}+  y x_1^{j+\ell} x_2^i +  y x_1^{i+\ell} x_2^j  +  y x_1^{i+j}x_2^\ell\nonumber\\
    & +  y^2 x_1^i x_2^{j+\ell}+  y^2 x_1^j x_2^{i+\ell}+  y^2 x_1^\ell x_2^{i+j}   + y^3 x_2^{i+j+\ell}
\end{align}
with $0 \leq i, j, \ell < k$.
Since these polynomials generate the space $S_k^{(3)}$, we will use the correspondence of these polynomials to triples $(i, j, \ell)$ later.

In the following we show an upper bound for the dimension of $S_k^{(3)}$ that is equal to the actual dimension we observed in all of our experiments, for $S_k^{(3)}$ as well as for the cube code of the public code; see Remark \ref{rem:experiments}.
Since the proof is a bit technical we will illustrate parts of the proof by means of an example and we encourage the reader to jump to Example \ref{blocks} when something is unclear.

\begin{theorem}\label{thm:dim_S_k}
    The dimension of $S_k^{(3)}$ over $\mathbb F_q$ is upper bounded by $2k^2-4k+4$ for $S_k = \{ f(x_1) + y f(x_2) \in \F_q[x_1, x_2,  y]: f \in \F_q[x]_{<k} \}$ and $k\geq 2$.
\end{theorem}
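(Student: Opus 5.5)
The plan is to decompose $S_k^{(3)}$ by total $(x_1,x_2)$-degree and to observe that, within each degree, every generator in \eqref{eq:cube} is a fixed linear function of the multiplicity vector of the multiset $\{i,j,\ell\}$. Then I bound the rank of these multiplicity vectors by two linear relations they always satisfy.

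\emph{Step 1 (grading).} Each product in \eqref{eq:cube} is homogeneous of degree $d=i+j+\ell$ in $(x_1,x_2)$, with $0\le d\le 3k-3$. Hence $S_k^{(3)}=\bigoplus_d V_d$, where $V_d$ is spanned by the products with $i+j+\ell=d$, and
\[
\dim S_k^{(3)}=\sum_{d=0}^{3k-3}\dim V_d .
\]

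\emph{Step 2 (linearization).} Reading off \eqref{eq:cube}, the product for the multiset $T=\{i,j,\ell\}$ is
\[
x_1^d+y^3x_2^d+\sum_{m=0}^{k-1} c_m\bigl(y\,x_1^{d-m}x_2^m+y^2x_1^m x_2^{d-m}\bigr),
\]
where $c_m\in\{0,1,2,3\}$ is the multiplicity of $m$ in $T$. So every generator of $V_d$ is the image of the integer vector $(1,c)\in\mathbb Z^{1+|M_d|}$ under one fixed $\F_q$-linear map. Here
\[
M_d=\bigl[\max(0,d-2(k-1)),\ \min(k-1,d)\bigr]
\]
is the set of exponents $m$ that can occur in a triple summing to $d$. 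Therefore $\dim V_d$ is at most the $\F_q$-rank of the integer matrix whose rows are these vectors $(1,c)$. That rank is at most its rank over $\Q$, since a nonvanishing minor mod $p$ is a nonzero integer.

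\emph{Step 3 (rank bound).} Every row $(1,c)$ satisfies the two linear relations
\[
\sum_m c_m-3\cdot 1=0,\qquad \sum_m m\,c_m-d\cdot 1=0 .
\]
If $|M_d|\ge 2$, these relations are linearly independent on $\Q^{1+|M_d|}$: the coefficient vectors on the $c$-coordinates are $(1,\dots,1)$ and $(m)_{m\in M_d}$, which are not proportional. Hence the rank is at most $|M_d|-1$. If $|M_d|=1$, which happens exactly for $d=0$ and $d=3k-3$, the rank is trivially $1$.

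\emph{Step 4 (counting).} We have
\[
|M_d|-1=\min(d,\ k-1,\ 3k-3-d).
\]
Summing over the three ranges $0\le d\le k-1$, $k\le d\le 2k-2$ and $2k-1\le d\le 3k-3$ gives
\[
\tfrac{k(k-1)}2+(k-1)^2+\tfrac{(k-1)(k-2)}2=2(k-1)^2 .
\]
The two extreme degrees contribute $0$ in this sum but have $\dim V_d\le 1$, so we add $2$. The total is
\[
2(k-1)^2+2=2k^2-4k+4 .
\]
The assumption $k\ge2$ guarantees that the extreme degrees are distinct and that the middle range makes sense.

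The main obstacle is Step 2: one must check carefully that the $y$- and $y^2$-parts really depend only on the multiplicities $c_m$, with the reversal $m\leftrightarrow d-m$ between them. Once that is verified, the rank argument and the counting are routine. Step 4 is worth illustrating in an example (as the paper does in Example \ref{blocks}).
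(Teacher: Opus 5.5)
Your proposal is correct and is essentially the paper's proof. The paper also splits the coefficient matrix into blocks by $s=i+j+\ell$, merges the identical columns ($x_1^s$ with $y^3x_2^s$, and $y\,x_1^{s-m}x_2^m$ with $y^2x_1^m x_2^{s-m}$), passes to the rank over $\mathbb{Z}$, and removes two columns per inner block using exactly your relations $\sum_m c_m=3$ and $\sum_m m\,c_m=s$ (it writes the second in the shifted form $\sum_r r\,c_{z+r}=s-3z$, which is equivalent given the first), arriving at the same count $2(k-1)^2+2$.
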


\begin{proof}
We identify the generators from equation \eqref{eq:cube} of $S_k^{(3)}$ with triples of the form $(i,j,\ell)$ where $0\leq \ell\leq j\leq i<k$ and order them (in non-decreasing order) according to their value of $i+j+\ell$.
Setting $s=i+j+\ell$ in \eqref{eq:cube}, we obtain that the set of monomials appearing in any of the elements of $S_k^{(3)}$ is a subset of
$$\bigcup_{s=0}^{3k-3}\{x_1^s, y x_1^s,y x_1^{s-1} x_2, y x_1^{s-2} x_2^2,\ldots,y x_2^s,y^2 x_1^s, y^2 x_1^{s-1} x_2,\ldots,y^2 x_2^s, y^3 x_2^s\}.$$

The dimension of the space $S_k^{(3)}$ is equal to the rank of the matrix $\calR$ whose columns correspond to these monomials and whose rows correspond to the generators from equation \eqref{eq:cube} of $S_k^{(3)}$, where the entry in row $(i,j,\ell)$ and column $y^{s_0} x_1^{s_1} x_2^{s_2}$ is equal to the coefficient of the monomial $y^{s_0} x_1^{s_1} x_2^{s_2}$ in the polynomial corresponding to $(i,j,\ell)$.

The matrix $\calR$ has only entries in $\{0,1,2,3\}$.
The rank of $\calR$ over $\F_q$ is upper bounded by the rank of $\calR$ considered as a matrix over $\mathbb{Z}$.
Therefore, we consider it as a matrix over $\mathbb{Z}$.

Ordering rows and columns according to $s$, we have that $\calR$ is a block diagonal matrix and hence, its rank is equal to the sum of the ranks of these $3k-2$ blocks.
Note that, for all $s\in\{0,\hdots,3k-3\}$, the column of block $s$ corresponding to $x_1^{s}$ is the all-one column.
The same is true for the column corresponding to $y^3x_2^{s}$, which we therefore can remove without changing the rank of $\calR$.
Also note that the column corresponding to $y^2 x_1^{s_1} x_2^{s_2}$ is identical to the column corresponding to $y x_1^{s_2} x_2^{s_1}$ 
and hence, the rank of $\calR$ does not change if we only consider the columns corresponding to 
\[
    \bigcup_{s=0}^{3k-3}\{x_1^s, y x_1^s, y x_1^{s-1} x_2, y x_1^{s-2} x_2^2, \ldots, y x_2^s\}.
\]
We first count the number of non-zero columns of each block and afterwards we will show that the rank for each block is upper bounded by its number of non-zero columns minus 2.

For $s\leq k-1$, the corresponding block has $s+2$ non-zero columns corresponding to the monomials $x_1^s, y x_1^s, y x_1^{s-1} x_2, \ldots, y x_2^s$.

For $k-1<s\leq 2(k-1)$, the corresponding block has $k+1$ non-zero columns corresponding to the monomials $x_1^s, y x_1^s, y x_1^{s-1} x_2, \ldots, y x_1^{s-(k-1)} x_2^{k-1}$ since we deleted the other identical non-zero columns.

For $2(k-1)<s\leq 3(k-1)$, the block has $3k-3-s+2$ non-zero columns corresponding to the monomials $x_1^s, y x_1^{2(k-1)} x_2^{s-2(k-1)}, \ldots, y x_1^{s-(k-1)} x_2^{k-1}$ since we deleted the other identical non-zero columns.

Note that the first column of each block is equal to the all-one column and the sum of the entries in each row of each block is equal to $4$ since we get $8$ terms when multiplying three brackets with two terms each in equation \eqref{eq:cube} and we already showed that we can delete half of the columns because each column appears twice.
Then, multiplying the first column with $3$ and subtracting all the other columns in each block gives the all-zero column.
This implies that the columns of each block are linearly dependent and any column can be deleted without changing the rank.

For $s=0$ and $s=3k-3$, the block consists only of one row (and two non-zero columns), corresponding to the polynomial where $(i, j, \ell)$ is equal to $(0,0,0)$ and $(k-1,k-1,k-1)$, respectively, and has rank one.

In the following, we assume $0<s<3k-3$, which also implies that we have at least $3$ non-zero columns per block, and show that in each of these blocks we can eliminate also a second column without changing the rank of the block.

Let $s=3z+d$ with $d\in\{0,1,2\}$.
The polynomials of the block correspond to $(i, j, \ell) = (z+r_1,z+r_2,z+r_3)$ with $k-z>r_1\geq r_2\geq r_3\geq-z\in\mathbb Z$, such that $r_1+r_2+r_3=d$.
We choose to eliminate the column corresponding to the monomial $y x_1^{s-z} x_2^{z}$ at first (as described above).

We remark here that from equation \eqref{eq:cube} we get that the coefficient of $y x_1^{i+j} x_2^\ell$ is equal to the maximum number of $i, j, \ell$ that coincide, i.e., it is $1$ if $\ell < j < i$, it is $2$ if $\ell = j, j < i$ or $i = j, \ell < j$ and it is $3$ if $i = j = \ell$.
Equivalently, this can be expressed in terms of $r_1, r_2, r_3$, where we have a $1$ if $r_3 < r_2 < r_1$, a $2$ if $r_3 = r_2, r_2 < r_1$ or $r_2 = r_1, r_3 < r_2$ and a $3$ if $r_3 = r_2 = r_1$.

Now, we multiply the column corresponding to $y x_1^{s-z-r} x_2^{z+r}$ with $r$ and sum up the results.
We claim that for the row $(i, j, \ell) = (z+r_1, z+r_2, z+r_3)$ this sum is equal to $r_1 + r_2 + r_3$.
This can be seen as follows.
If $r_3 < r_2 < r_1$, then the only non-zero entries in this row are in the columns corresponding to the monomials $x_1^s, y x_1^{s-z-r_1} x_2^{z+r_1}, y x_1^{s-z-r_2} x_2^{z+r_2}, y x_1^{s-z-r_3} x_2^{z+r_3}$ and these non-zero entries are all equal to $1$.
Multiplying these entries equal to $1$ with the values described above, i.e., with $r_1,r_2,r_3$, respectively, gives $r_1 + r_2 + r_3$ (note that we do not use the column for $x_1^s$).
The other cases for $r_1,r_2,r_3$ are dealt with similarly.
In this way we obtain a column all of whose entries are equal to $r_1+r_2+r_3=d$.
Therefore, the result is $d$-times the first column (i.e., the column corresponding to $x_1^s$).
Hence, due to this dependency, we can eliminate a second column in each block.

Therefore, the overall rank of $S_k^{(3)}$ is upper bounded by
\begin{align*}
&1+\sum_{s=1}^{k-1}s+(k-1)^2+\sum_{s=2k-1}^{3k-4}(3k-3-s)+1=2+(k-1)k+2\sum_{s=1}^{k-2}s\\
&=2k^2-4k+4,
\end{align*}
as we claimed.
\end{proof}

\begin{example}\label{blocks}
Let $k=6$, i.e., $s\in\{0,\hdots,15\}$.
We consider first what happens for $s=2$. The polynomials to be considered here are the ones corresponding to $(1,1,0)$ and $(2,0,0)$.
The monomials appearing in these polynomials, i.e., the monomials corresponding to non-zero columns in the block for $s=2$, are $x_1^2,y x_1^2, y x_1 x_2, y x_2^2, y^2 x_1^2, y^2 x_1 x_2, y^2 x_2^2, y^3 x_2^2$
and the corresponding block/submatrix inside the matrix $\calR$ is
      $$\begin{array}{cccccccccccccc}
     & x_1^2 &y x_1^2 & y x_1 x_2 & y x_2^2 & y^2 x_1^2 & y^2 x_1 x_2 & y^2 x_2^2 & y^3 x_2^2\\
    (1,1,0) & 1   & 1 & 2 & 0 & 0 & 2 & 1 & 1  \\
    (2,0,0)  & 1  & 2 & 0 & 1 & 1 & 0 & 2 & 1 & 
       \end{array}
      $$
 After eliminating the second half of columns (since they are all identical to one column from the first half) and observing that $z=0$ in this case, we obtain   
 $$\begin{array}{cccccccccccccc}
     & x_1^2 & y x_1^2 x_2^z & y x_1 x_2^{z+1} & y x_2^{z+2}\\
    (1,1,0) & 1   & 1 & 2 & 0 &  \\
    (2,0,0)  & 1  & 2 & 0 & 1 
       \end{array}
      $$
Multiplying the first column by $3$ and subtracting the other columns lets us eliminate the second column.
Moreover, we observe that the third column times $1$ plus the fourth column times $2$ gives two times the first column.
Therefore, the rank of this block is two, which in this very simple case of course could have been seen much easier considering rows instead of columns, but we use it to illustrate the general case, in which column operations are easier since the same kind of column operations works for each block.

Consider now in addition the case $s=7$.
    The corresponding block is 
    \small
    $$\begin{array}{cccccccccccccc}
     & x_1^7 & y x_1^7 & y x_1^6 x_2 & y x_1^5 x_2^2 & y x_1^4 x_2^3 & y x_1^3 x_2^4 & y x_1^2 x_2^5 & y x_1 x_2^6 & y x_2^7\\
    (3,2,2) & 1   & 0 & 0 & 2 & 1 & 0 & 0 & 0 & 0 \\
    (3,3,1)  & 1  & 0 & 1 & 0 & 2 & 0 & 0 & 0 & 0 & \\
    (4,2,1)& 1& 0 & 1 & 1 & 0 & 1 & 0 & 0 & 0 & \\
    (4,3,0) & 1 & 1 & 0 & 0 & 1 & 1 & 0 & 0 & 0 & \\
    (5,1,1) & 1 & 0 & 2 & 0 & 0 & 0 & 1 & 0 & 0\\
    (5,2,0) & 1 & 1 & 0 & 1 & 0 & 0 & 1 & 0 & 0
   \\
      & x_1^7 & y x_1^7 x_2^{z-2} & y x_1^6 x_2^{z-1} & y x_1^5 x_2^z & y x_1^4 x_2^{z+1} &  y x_1^3 x_2^{z+2} & y x_1^2 x_2^{z+3}& y x_1 x_2^{z+4} & y x_2^{z+5}
       \end{array}
      $$
      \normalsize
where we already deleted the second half of (identical) columns. Since $s>k-1$ in this case, the last two columns are zero and can be removed.
Eliminating afterwards two more columns with the column operations as described above with $d = r_1 + r_2 + r_3 = 1$, gives that the rank of this block is equal to $5$. Hence, for this block the rank is smaller than its number of rows. 
The existence of blocks with this property is the reason why the dimension of the cube code here is smaller than what is conjectured for the random case.
\end{example}

\begin{corollary}
\label{cor:upper_bound_public_code}
    The dimension of the cube code of the public code $\GRS_k(P, \mu) M$ is upper bounded by $\min\{n, 2 k^2 - 4k + 4\}$.
\end{corollary}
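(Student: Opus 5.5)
The bound $n$ is immediate, since the cube code is a subspace of $\F_q^n$. For the other bound, I would reduce to the equivalent code $\GRS_k(P, \mu) M \diag(\mu\ast\alpha)^{-1}$. At the end of Section~\ref{sec:Desc_McEliece_GRS_Weight_Two} it was observed that $(C\cdot D)^{(3)}=C^{(3)}\cdot D^3$ for any invertible diagonal $D$. Hence the cube codes of $\GRS_k(P, \mu) M$ and of $\GRS_k(P, \mu) M \diag(\mu\ast\alpha)^{-1}$ have the same dimension, and it suffices to bound the latter.

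Next, I would describe this equivalent code as an evaluation code. Consider the evaluation map
\[
\mathrm{ev}: \F_q[x_1,x_2,y] \to \F_q^n, \qquad g \mapsto \bigl(g(P_1,P_{\Pi(1)},\nu_1),\ldots,g(P_n,P_{\Pi(n)},\nu_n)\bigr).
\]
This map is linear and multiplicative in the sense that $\mathrm{ev}(gh)=\mathrm{ev}(g)\ast\mathrm{ev}(h)$. By \eqref{eq:codewords_equivalent_code}, the code $\GRS_k(P, \mu) M \diag(\mu\ast\alpha)^{-1}$ equals $\mathrm{ev}(S_k)$.

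Multiplicativity then shows that every Schur product $c_1\ast c_2\ast c_3$ with $c_i=\mathrm{ev}(g_i)$ and $g_i\in S_k$ equals $\mathrm{ev}(g_1g_2g_3)$. Since $\mathrm{ev}$ is linear, taking spans gives $\mathrm{ev}(S_k)^{(3)}=\mathrm{ev}(S_k^{(3)})$. A linear image cannot have larger dimension than its source, so
\[
\dim \mathrm{ev}(S_k)^{(3)}\le \dim S_k^{(3)} \le 2k^2-4k+4
\]
by Theorem~\ref{thm:dim_S_k}.

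The only point needing care is the hypothesis $k\geq 2$ of Theorem~\ref{thm:dim_S_k}. For $k=1$ the code is one-dimensional, so its cube code has dimension at most $1$, and $1\le 2-4+4=2$ holds directly. No step here is a real obstacle: all the substantive work is contained in Theorem~\ref{thm:dim_S_k}. Combining the two bounds gives $\min\{n,2k^2-4k+4\}$.
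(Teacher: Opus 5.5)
Your proof is correct and follows the same route the paper intends. You pass to the diagonally rescaled code $\GRS_k(P,\mu)M\diag(\mu\ast\alpha)^{-1}$, identify it with the evaluation of $S_k$, observe that its cube code is the evaluation image of $S_k^{(3)}$, and apply Theorem~\ref{thm:dim_S_k}; the bound $n$ is trivial. Your separate check of $k=1$, which the theorem's hypothesis $k\geq 2$ does not cover, is a small addition that the paper leaves implicit.
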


\begin{remark}\label{rem:experiments}
We verified with the computer that for $2 \leq k < 60$ the dimension of $S_k^{(3)}$ is actually equal to $2 k^2 - 4k + 4$ over the fields $
\Q, \F_2, \F_3, \F_4, \F_5, \F_7, \F_9, \F_{81}, \F_{97}, \F_{257}, \F_{625}, \F_{10007}$.
We did not find a field where this was not the case.
Moreover, in all our experiments for the attack and in additional experiments for $k < 60$, the actual dimension of the cube code of the public code was equal to $\min\{n, 2 k^2 - 4k + 4\}$.
\end{remark}

If instead of this best possible upper bound from Corollary \ref{cor:upper_bound_public_code} one is satisfied with a quadratic upper bound with worse constants, there are considerably shorter ways to prove such a bound without using Theorem \ref{thm:dim_S_k}.
For example one can use that the public code is a subcode of the sum $C_1 + C_2$ of two GRS codes $C_1, C_2$ of dimension $k$ each, as well as Lemma \ref{lem:power_codes} in the form
$$(C_1 + C_2)^{(3)} \subseteq C_1^{(3)} + C_1^{(2)} \ast C_2 + C_1 \ast C_2^{(2)} + C_2^{(3)},$$ 
where the dimension of the latter is upper bounded by $(3k - 2) + (2k - 1) k + k (2k - 1) + (3 k - 2)=4k^2+4k-4$.
We can also use this to get bounds on power codes of higher weight masked GRS codes.
\begin{lemma}
\label{lemma:higher_weight_masking_higher_power_codes}
    Let $M$ be an $n \times n$ matrix whose column weight is bounded from above by $w$.
    Then
    \[
        \dim ((\GRS_k(P, \mu) M)^{w+1}) = O(k^w).
    \]
\end{lemma}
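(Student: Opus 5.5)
The plan is to generalize the short argument given just before the statement for the weight-two case. Write $M$ column by column. Column $c$ of $M$ has at most $w$ non-zero entries, say in rows $r_1(c),\ldots,r_w(c)$; pad with repeated indices and zero coefficients if the weight is smaller. For $f\in\F_q[x]_{<k}$, the $c$-th coordinate of the codeword $(\mu_1 f(P_1),\ldots,\mu_n f(P_n))M$ is then
\[
\sum_{t=1}^{w} m_{r_t(c),c}\,\mu_{r_t(c)} f\bigl(P_{r_t(c)}\bigr).
\]
So $\GRS_k(P,\mu)M\subseteq C_1+\cdots+C_w$, where $C_t$ is the code with coordinates $(\lambda_{t,c} f(Q_{t,c}))_{c}$. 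Here $Q_{t,c}=P_{r_t(c)}$ and $\lambda_{t,c}=m_{r_t(c),c}\mu_{r_t(c)}$. Each $C_t$ is an evaluation code of polynomials of degree $<k$ at points $Q_{t,1},\ldots,Q_{t,n}$. These points need not be distinct, and some multipliers may be zero. Still, $C_t$ is the image of $\F_q[x]_{<k}$ under a linear map, so $\dim C_t\le k$.

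Next I apply Lemma~\ref{lem:power_codes} with $m=w$ and exponent $w+1$:
\[
(\GRS_k(P,\mu)M)^{(w+1)}\subseteq \sum_{i_1,\ldots,i_{w+1}\in\{1,\ldots,w\}} C_{i_1}\ast\cdots\ast C_{i_{w+1}}.
\]
It remains to bound the dimension of each summand. By the pigeonhole principle, among the $w+1$ indices some $t$ appears at least twice. Group the factors by index, so the summand equals $\ast_{t} C_t^{(e_t)}$ with $\sum_t e_t=w+1$.

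The key observation is that $C_t^{(e)}$ is contained in the evaluation code of polynomials of degree $<e(k-1)+1$ at the same points with multipliers $\lambda_{t,c}^e$. Hence $\dim C_t^{(e)}\le e(k-1)+1\le ek$, which is linear in $k$ rather than $\binom{k+e-1}{e}$. Combining this with the trivial bound $\dim(A\ast B)\le \dim A\cdot\dim B$, each summand has dimension at most $\prod_t e_tk$. The number of distinct $t$ with $e_t\ge1$ is at most $w$. Since some $e_t\ge2$ and $\sum e_t = w+1$, the number of factors is at most $w$. So each summand is $O(k^{w})$, with a constant depending only on $w$.

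Finally, the number of summands is at most $w^{w+1}$, which also depends only on $w$. Therefore the whole sum has dimension $O(k^w)$, with constants depending on $w$ and not on $k$ or $n$. I do not expect a serious obstacle. The step needing the most care is the bookkeeping that each $C_t$ is a legitimate polynomial evaluation code even with non-distinct points and zero multipliers; this only requires the codes to be images of low-degree polynomial spaces. The other point is the pigeonhole step, which is exactly where the exponent $w+1$, rather than $w$, is used to save one power of $k$.
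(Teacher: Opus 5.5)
Your proof is correct and follows the paper's argument: you embed $\GRS_k(P,\mu)M$ in a sum of $w$ degenerate GRS-type evaluation codes and expand the $(w+1)$-th power via Lemma~\ref{lem:power_codes}. You then use that the $w+1$ indices collapse into at most $w$ groups, each contributing only $O(k)$ because powers of an evaluation code grow linearly in $k$. Your observation that the points $Q_{t,c}$ need not be distinct, together with the bound $\dim C_t^{(e)}\le e(k-1)+1$, is slightly more careful than the paper, which only allows zero multipliers and cites $\dim \GRS_k^{(2)}\le 2k-1$.
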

\begin{proof}
    By definition of $M$ we have that $\GRS_k(P, \mu) M$ is contained in a sum of $w$ many $\GRS$ codes (here and only here we also allow column multipliers to be zero since this does not change the bounds on the dimension of powers of GRS codes).
    Let us denote them by $\GRS_k(Q_1, \mu_1), \ldots, \allowbreak \GRS_k(Q_w, \mu_w)$.
    Then, Lemma \ref{lem:power_codes} gives
    \begin{align*}
        &\left( \sum_{i=1}^w \GRS_k(Q_i, \mu_i) \right)^{(w+1)}\\
        &\subseteq \sum_{i_1, \ldots, i_{w+1} \in \{1, \ldots, w\}} \GRS_k(Q_{i_1}, \mu_{i_1}) \ast \ldots \ast \GRS_k(Q_{i_{w+1}}, \mu_{i_{w+1}}).
    \end{align*}
    Note that for each summand at least two of $i_1, \ldots, i_{w+1}$ are the same.
    Using in addition
    \[
        \dim \GRS_k(Q_{i_j}, \mu_{i_j})^{(2)} \leq 2 k - 1 = O(k),
    \]
    the result follows.
\end{proof}

\subsection{The Effect of Shortening}

In this section we discuss the effect of shortening and how to use it to extend the range of the distinguisher.

Recall from equation \eqref{eq:code_of_zeros_J} that for a code $C$ and a subset $J \subseteq \{1, \ldots, n\}$ we consider the code
\[
    C(J) = \{c \in C: c_i = 0 \text{ for all } i \in J\}.
\]
In addition recall that $\shortening_J(C) = \puncturing_J(C(J))$.
We assume that the codes never degenerate and that the code $C(J)$ has dimension $k - |J|$ where $k$ is the dimension of $C$.
Moreover, for ease of notation we assume using equations \eqref{eq:monomial_change} and \eqref{eq:mask_adding_i_and_pi_i} that $M$ is equal to $\hat{M}$ of the form
\[
    \hat{M}=\diag(\alpha)+\Pi \cdot \diag(\beta).
\]

First consider a single position at which the code is shortened.
For simplicity assume we shorten the code $\GRS_k(P, \mu) M$ in the first position.
Recall from equation \eqref{eq:codewords_equivalent_code} that we can go to an equivalent code with codewords $c$ such that $c_1 = f(P_1) + \nu_1 f(P_{\Pi(1)})$.

Note that if $f(P_1) = 0 = f(P_{\Pi(1)})$, then also $c_1 = 0$.
Therefore, the $(k-1)$-dimensional code $(\GRS_k(P, \mu) M) (\{1\})$ contains the code $\GRS_k(P, \mu)(\{1, \Pi(1)\}) M$ as a $(k-2)$-dimensional subcode.

More generally, let $J$ be the set of positions in which we shorten the code $\GRS_k(P, \mu) M$, i.e., we consider the code $(\GRS_k(P, \mu) M) (J)$.
Then, we have
\begin{equation}
\label{eq:subcode_shortening}
    \GRS_k(P, \mu)(\{J \cup \Pi(J)\}) M \subseteq (\GRS_k(P, \mu) M) (J).
\end{equation}
Note that the right hand side has dimension $k - |J|$ and the left hand side has dimension $k - |J \cup \Pi(J)|$, for which we know
\[
    |J| \leq |J \cup \Pi(J)| \leq 2 |J|.
\]
For illustration we start with a special case.
If $|J| = |J \cup \Pi(J)|$, then the two codes in equation \eqref{eq:subcode_shortening} are the same, i.e.,  
$$\GRS_k(P, \mu)(J ) M = (\GRS_k(P, \mu) M) (J).$$
\begin{lemma}
\label{lemma:J_eq_PJ_shorten_is_wt_2_masked}
If $J = \Pi(J)$, then 
$$\shortening_J(\GRS_k(P, \mu) M)=\shortening_J(\GRS_k(P,\mu))M_J$$
where $M_J$ is a matrix with row and column weight two that is obtained from $M$ by deleting all rows and columns with index in $J$.
\end{lemma}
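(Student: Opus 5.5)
We work with $M$ in the normalized form $\hat{M}=\diag(\alpha)+\Pi\cdot\diag(\beta)$, as in the surrounding discussion. The general case then follows by the monomial changes \eqref{eq:monomial_change}: shortening commutes with a permutation of coordinates and with column scaling, provided the index set $J$ is transported accordingly.

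\textbf{Step 1: describe the nonzero entries of $M$.} Column $j$ of $M$ has its nonzero entries in rows $j$ and $\Pi(j)$. Row $i$ has its nonzero entries in columns $i$ and $\Pi^{-1}(i)$. For a codeword $cM$ with $c=(\mu_1 f(P_1),\ldots,\mu_n f(P_n))$ this gives
\[
(cM)_j=\alpha_j c_j+\beta_j c_{\Pi(j)}.
\]

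\textbf{Step 2: the two shortened codes agree before puncturing.} Since $\Pi$ is a bijection, $J=\Pi(J)$ implies $J\cup\Pi(J)=J$. Equation \eqref{eq:subcode_shortening} then gives
\[
\GRS_k(P,\mu)(J)\,M\subseteq(\GRS_k(P,\mu)M)(J).
\]
Under the paper's standing nondegeneracy assumption both sides have dimension $k-|J|$. Alternatively, injectivity of $c\mapsto cM$ shows the left side has dimension $\dim \GRS_k(P,\mu)(J)=k-|J|$. Hence $(\GRS_k(P,\mu)M)(J)=\GRS_k(P,\mu)(J)M$.

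\textbf{Step 3: puncturing.} Let $c\in\GRS_k(P,\mu)(J)$, so $c_i=0$ for all $i\in J$. For $j\notin J$ we have $\Pi(j)\notin J$, because $\Pi(J)=J$ and $\Pi$ is a bijection. So $(cM)_j=\alpha_jc_j+\beta_jc_{\Pi(j)}$ involves only coordinates outside $J$. This means
\[
(cM)_j=\bigl(\puncturing_J(c)\,M_J\bigr)_j,
\]
where $M_J$ is $M$ with the rows and columns indexed by $J$ deleted. Consequently
\[
\puncturing_J(cM)=\puncturing_J(c)\,M_J.
\]
Applying $\puncturing_J$ to the equality from Step 2 yields
\[
\shortening_J(\GRS_k(P,\mu)M)=\shortening_J(\GRS_k(P,\mu))M_J.
\]

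\textbf{Step 4: $M_J$ has row and column weight two.} For $j\notin J$, both nonzero entries of column $j$ (rows $j$ and $\Pi(j)$) survive in $M_J$. For row $i\notin J$, the columns $i$ and $\Pi^{-1}(i)$ are both outside $J$, since $\Pi^{-1}(J)=J$. So every row and column of $M_J$ has exactly two nonzero entries.

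Invertibility of $M_J$ (if it is wanted) follows from a block argument. After reordering coordinates, $\Pi$ preserves both $J$ and its complement, so $M$ is block diagonal with blocks $M_J$ and the $J\times J$ block. Therefore $\det M=\det M_J\cdot\det(\text{$J\times J$ block})\ne0$, and $M_J$ is invertible.

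The only real subtlety is the dimension equality in Step 2. This is where either the nondegeneracy assumption or the injectivity of $c\mapsto cM$ is needed. In fact the inclusion is already an equality directly: any $v=cM$ with $v_J=0$ satisfies, by the block structure, $c_J\cdot(\text{$J\times J$ block})=0$, hence $c_J=0$ by invertibility of that block.
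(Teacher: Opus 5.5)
Your argument is correct and is essentially the paper's proof. When $J=\Pi(J)$ the mask is block diagonal with blocks indexed by $J$ and its complement, so $(\GRS_k(P,\mu)M)(J)=\GRS_k(P,\mu)(J)M$, and puncturing at $J$ gives $\shortening_J(\GRS_k(P,\mu))M_J$; the paper does this with a generator matrix $(0\mid G_J)$ rather than coordinatewise. Your closing observation is a small improvement: $c_J$ times the invertible $J\times J$ block vanishing forces $c_J=0$, which proves the pre-puncturing equality without the nondegeneracy assumption the paper relies on. The ``injectivity'' alternative in Step 2 would not suffice on its own, since it only gives the dimension of the smaller code.
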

\begin{proof}
Without loss of generality and for ease of notation assume that $J=\{1,\hdots,|J|\}$. 
Since $J=\Pi(J)$, the weight two mask $M$ is of the form $\begin{pmatrix}
    \ast & 0_{|J|\times (n-|J|)}\\
    0_{(n-|J|)\times |J|} & M_J
\end{pmatrix}$.
Denote by $G$ a generator matrix of $\GRS_k(P, \mu)(J)$ and by $G_J$ the generator matrix of 
$\shortening_J(\GRS_k(P,\mu))$ which satisfies
\[
    G = \begin{pmatrix}
0_{(k - |J|)\times |J|} & G_J 
\end{pmatrix}.
\]
We obtain,
\begin{align*}
G M&=\begin{pmatrix}
0_{(k - |J|) \times |J|} & G_J 
\end{pmatrix}\cdot \begin{pmatrix}
    \ast & 0_{|J|\times (n-|J|)}\\
    0_{(n-|J|)\times |J|} & M_J
\end{pmatrix}\\
&=\begin{pmatrix}
  0_{(k - |J|) \times|J|} &   G_JM_J
\end{pmatrix}.
\end{align*}
Since $\shortening_J(\GRS_k(P, \mu) M)$ is obtained by puncturing $GRS_k(P,\mu)(J) M$ in the positions in $J$, the result follows.
\end{proof}
Recall the following well-known lemma.
\begin{lemma}
\label{lemma:shorten_GRS_code}
    The code
    \[
        \shortening_J(\GRS_k(P,\mu))
    \]
    is a GRS code of length $n-|J|$ and dimension $k-|J|$.
\end{lemma}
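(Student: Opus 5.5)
The plan is to identify the shortened code explicitly as a set of evaluation vectors of polynomials. The codewords of $\GRS_k(P,\mu)(J)$ are exactly the vectors $(\mu_1 f(P_1),\ldots,\mu_n f(P_n))$ with $\deg f<k$ and $f(P_j)=0$ for all $j\in J$.

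Since the $P_j$ are pairwise distinct, this vanishing condition is equivalent to $f = g\cdot h_J$, where $h_J(x)=\prod_{j\in J}(x-P_j)$ and $g$ ranges over $\F_q[x]_{<k-|J|}$. Here we use that $\deg h_J=|J|$, so $\deg f<k$ if and only if $\deg g<k-|J|$. Puncturing the coordinates in $J$ then shows that
\[
    \shortening_J(\GRS_k(P,\mu))=\{(\mu_i h_J(P_i) g(P_i))_{i\notin J} \bigm| g\in\F_q[x]_{<k-|J|}\}.
\]
This is precisely $\GRS_{k-|J|}(P',\mu')$ with evaluation points $P'=(P_i)_{i\notin J}$ and column multipliers $\mu'_i=\mu_i h_J(P_i)$ for $i\notin J$.

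It remains to verify the conditions of Definition \ref{def:reedsolomon}. The points $P'$ are pairwise distinct because they form a subfamily of $P$. The multipliers are nonzero because $\mu_i\neq 0$ and $h_J(P_i)=\prod_{j\in J}(P_i-P_j)\neq 0$ for $i\notin J$. Finally, the length is $n-|J|$ and the parameter constraint $k-|J|\le n-|J|\le q$ holds.

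For the dimension, I would note that evaluation of polynomials of degree less than $k-|J|$ at $n-|J|\ge k-|J|$ distinct points is injective. Hence the code has dimension exactly $k-|J|$, assuming $|J|\le k$ so the statement is non-degenerate, which is the standing assumption of this subsection.

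The only step needing care is the factorization $f=g\,h_J$, and this is immediate from the distinctness of the roots. So I expect no real obstacle.
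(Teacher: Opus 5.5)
Your proof is correct. The paper gives no proof of this lemma and only recalls it as well known; your argument is the standard proof of that fact. You write every $f$ vanishing on $\{P_j\}_{j\in J}$ as $f=g\,h_J$ with $\deg g<k-|J|$, and absorb $h_J(P_i)\neq 0$ into the new column multipliers.

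One small correction: the non-degeneracy assumption should read $|J|<k$, not $|J|\le k$. For $|J|=k$ the shortened code is $\{0\}$, which is not a GRS code in the sense of Definition~\ref{def:reedsolomon}, since that definition requires positive dimension.
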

Thus, in the setting of Lemma \ref{lemma:J_eq_PJ_shorten_is_wt_2_masked}, the shortening in positions $J$ gives another weight $2$ masked code.

In general, when $|J \cup \Pi(J)| = |J| + c$ with $0 \leq c \leq |J|$ we have that the codimension in \eqref{eq:subcode_shortening} is $c$.
This gives at least an upper bound for the dimension of the cube code of $(\GRS_k(P, \mu) M) (J)$.
Namely, let
\begin{equation}
\label{eq:shortening_subcode_complement}
    (\GRS_k(P, \mu) M) (J) = \GRS_k(P, \mu)(\{J \cup \Pi(J)\}) M + \calE
\end{equation}
where $\calE$ is a $c$-dimensional complement.
Then, we have
\begin{align}
\label{eq:cube_of_shortened_code}
    &((\GRS_k(P, \mu) M) (J))^{(3)} \subseteq\\
    &(\GRS_k(P, \mu)(\{J \cup \Pi(J)\}) M)^{(3)} + (\GRS_k(P, \mu)(\{J \cup \Pi(J)\}) M)^{(2)} \ast \calE \nonumber\\
    &+ (\GRS_k(P, \mu)(\{J \cup \Pi(J)\}) M) \ast \calE^{(2)} + \calE^{(3)}. \nonumber
\end{align}
\begin{remark}
\label{rmk:sum_not_direct}
    In our experiments we observed that the first two summands have non-trivial intersection.
    This means that also the bounds we derive below are not tight.
    Moreover, the bound we derive for $(\GRS_k(P, \mu)(\{J \cup \Pi(J)\}) M)^{(3)}$ to get the overall bound is also not tight.
    In our experiments we observed that one can shorten in way more positions than the bound below suggests.
    
    In the experiments it even seems as if asymptotically the quotient between the number of positions that we can shorten in and $n$ goes to $1$.
    However this does not mean that the distinguisher works for all rates (not even for constant rates) because the growth in the number of positions in which we can shorten is too slow for that.
\end{remark}

\begin{lemma}
    We have for the cube code of $(\GRS_k(P, \mu) M) (J)$ the following bound:
    \begin{align*}
        &\dim ((\GRS_k(P, \mu) M) (J))^{(3)} \\ &\leq
        2 k^2 - 4 k + 4 - (4 k (|J| + c)- 2 (|J| + c)^2 - 4 (|J| + c)) + \binom{k+1}{2} c + k \binom{c+1}{2} + \binom{c+2}{3}.
    \end{align*}
\end{lemma}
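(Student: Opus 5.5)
We bound the four summands on the right-hand side of \eqref{eq:cube_of_shortened_code} separately and then use subadditivity of dimension for sums of subspaces. Throughout we set $m := |J| + c = |J \cup \Pi(J)|$ and $D := \GRS_k(P, \mu)(\{J \cup \Pi(J)\}) M$, so that $\dim D = k - m$ and $\dim \calE = c$ by \eqref{eq:shortening_subcode_complement}.

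The last three summands only need generic counting. If $A$ has dimension $a$, then $A^{(2)}$ is spanned by products of pairs of basis vectors, so $\dim A^{(2)} \leq \binom{a+1}{2}$. Likewise $\dim A^{(3)} \leq \binom{a+2}{3}$, and $\dim(A \ast B) \leq \dim A \cdot \dim B$. This gives:
\begin{itemize}
\item $\dim(D^{(2)} \ast \calE) \leq \binom{k-m+1}{2} c \leq \binom{k+1}{2} c$;
\item $\dim(D \ast \calE^{(2)}) \leq (k-m)\binom{c+1}{2} \leq k\binom{c+1}{2}$;
\item $\dim \calE^{(3)} \leq \binom{c+2}{3}$.
\end{itemize}
These are exactly the three trailing terms of the claimed bound.

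The main step is the first summand. Expanding, $2(k-m)^2 - 4(k-m) + 4 = 2k^2 - 4k + 4 - (4km - 2m^2 - 4m)$, so it suffices to show $\dim D^{(3)} \leq 2(k-m)^2 - 4(k-m) + 4$. In other words, we need Theorem~\ref{thm:dim_S_k} with $k$ replaced by $k-m$.

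To prove this, pass to the equivalent code \eqref{eq:codewords_equivalent_code}. Let $h = \prod_{i \in J \cup \Pi(J)} (x - P_i)$. The polynomials $f$ of degree $<k$ vanishing at all $P_i$ with $i \in J \cup \Pi(J)$ are exactly the $f = hg$ with $\deg g < k - m$. Hence $D$ (up to a diagonal equivalence, which preserves the dimension of the cube code) is the evaluation at the points $(P_i, P_{\Pi(i)}, \nu_i)$ of the space
\[
S' = \{h(x_1)g(x_1) + y\, h(x_2)g(x_2) : g \in \F_q[x]_{<k-m}\}.
\]
Consider the ring homomorphism $\varphi$ from $\F_q[x_1,x_2,y']$ into the localization $\F_q[x_1,x_2,y][h(x_1)^{-1}]$ sending $y' \mapsto y\,h(x_2)/h(x_1)$. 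Then each element of $S'$ equals $h(x_1)\,\varphi(g(x_1) + y' g(x_2))$. Consequently every triple product of elements of $S'$ equals $h(x_1)^3 \varphi(\cdot)$ applied to a triple product of elements of $S_{k-m}$. So $S'^{(3)}$ is the image of $S_{k-m}^{(3)}$ under the linear map $p \mapsto h(x_1)^3 \varphi(p)$, and Theorem~\ref{thm:dim_S_k} gives $\dim S'^{(3)} \leq 2(k-m)^2 - 4(k-m) + 4$ when $k - m \geq 2$. Evaluation at the $n$ points is linear and maps $S'^{(3)}$ onto $D^{(3)}$, so the same bound holds for $D^{(3)}$.

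I expect the degenerate cases to be the main nuisance. When $k - m \in \{0, 1\}$, Theorem~\ref{thm:dim_S_k} does not apply, but then $\dim D^{(3)} \leq 1$, and one checks directly that $2(k-m)^2 - 4(k-m) + 4 \geq 1$. One must also make sure the diagonal rescaling used for \eqref{eq:codewords_equivalent_code} is compatible with shortening. It is, because it is a monomial change of the ambient code that commutes with taking the subcode vanishing on $J$.

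Adding the four bounds yields the statement.
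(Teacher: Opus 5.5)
Your proof is correct and follows the paper's route. You split the cube code via \eqref{eq:cube_of_shortened_code}, bound the first summand by Theorem~\ref{thm:dim_S_k} with $k$ replaced by $k-|J|-c$, and bound the other three summands by the generic counts $\binom{k+1}{2}c$, $k\binom{c+1}{2}$ and $\binom{c+2}{3}$. The paper simply asserts, via Lemma~\ref{lemma:shorten_GRS_code}, that $\GRS_k(P,\mu)(\{J\cup\Pi(J)\})M$ is equivalent to an evaluation code of $S_{k-|J|-c}$ with zero columns appended; your factorization $f=hg$ with the substitution $y'\mapsto y\,h(x_2)/h(x_1)$ makes that reduction explicit, and you also handle the degenerate cases $k-|J|-c\le 1$.
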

\begin{proof}
    By Lemma \ref{lemma:shorten_GRS_code} we know that $\GRS_k(P, \mu)(\{J \cup \Pi(J)\})$ is a GRS code of dimension $k - |J| - c$ and length $n - |J| - c$ with zero columns appended.
    In particular, the code  $\GRS_k(P, \mu)(\{J \cup \Pi(J)\}) M$ is equivalent to an evaluation code of $S_{k-|J|-c}^{(3)}$ with zero columns appended and thus, we can bound the dimension using Theorem \ref{thm:dim_S_k} as follows:
\begin{align*}
    &\dim (\GRS_k(P, \mu)(\{J \cup \Pi(J)\}) M)^{(3)} \leq 2 (k - |J| - c)^2 - 4 (k - |J| - c) + 4\\
    &= 2 (k^2 + (|J| + c)^2 - 2 k (|J| + c)) - 4k + 4(|J| + c) + 4\\
    &= 2 k^2 - 4 k + 4 - (4 k (|J| + c) - 2 (|J| + c)^2 - 4 (|J| + c)).
\end{align*}
Putting everything together we get
\begin{align*}
    &\dim ((\GRS_k(P, \mu) M) (J))^{(3)}\\ &\leq
    2 k^2 - 4 k + 4 - (4 k (|J| + c)
    - 2 (|J| + c)^2 - 4 (|J| + c)) + \binom{k+1}{2} \cdot c + k \cdot \binom{c+1}{2} + \binom{c+2}{3},
\end{align*}
as claimed.
\end{proof}

If this expression is smaller than $\min \left\{n - |J|, \binom{k - |J| +2}{3} \right\}$, we can distinguish using the cube code.
Note that in the special case $c = 0$, we get just the usual bound $2(k-|J|)^2-4(k-|J|)+4$ that we would also get from Lemma \ref{lemma:J_eq_PJ_shorten_is_wt_2_masked} together with Theorem \ref{thm:dim_S_k}.

\section{From Distinguishing to Key Recovery}
\label{sec:framework_distinguishing_key_recovery}

\subsection{Basic Idea}

Assume that we consider a certain class of $[n, k]_q$ codes that we can distinguish from random codes with the help of a certain distinguisher.
We assume that also monomial transformations of such codes are distinguishable.
Let $C$ be such a code and let $M \in \F_q^{n \times n}$ be an invertible matrix with average column or row weight $\delta \geq 1$.
Now, if we start with $\delta = 1$ and increase it, we expect the code $CM$ to get closer to random with respect to our distinguisher and our distinguisher to degrade.
Ideally, the probability of distinguishing does not change until it falls abruptly to $\frac{1}{2}$, but the information our distinguisher gives us comes closer to the random value.

Let us give an example.
Assume we can distinguish using the square code.
Ideally, the dimension of the square code increases when we increase $\delta$, but the probability that $CM$ is distinguishable does not substantially fall until we hit the point where the dimension of the square code of $CM$ is the dimension of the square code of a random code.
At this point it falls abruptly to $\frac{1}{2}$.

In our setting we actually start with a code $CM$ with $M$ having row weight $\delta > 1$.
If we have a corresponding distinguisher, we can try to use it to recover $M$ by considering $C M N$ for some very sparse $N$.
Depending on the form of $N$ we might or might not have cancellation in the product $M N$, i.e., the number of non-zero entries in $MN$ might or might not be smaller than or equal to the number of non-zero entries in $M$.
Then, we use our distinguisher to distinguish between the cases, i.e., different matrices $N$, with or without cancellation and in that way we get information about $M$.

In the next subsection, we give concrete choices for the matrix $N$ as above for the weight two masked McEliece cryptosystem with GRS codes.
Afterwards we use the information obtained from the distinguisher to recover a matrix that is in some sense equivalent to $M$, from which we can then recover the whole secret key.
The actual attack will only work for $n > 2 k^2 - 4k + 4$.

The basic idea of using cancellations when doing column operations for key recovery was already observed in \cite[Proposition 5]{couvreur2015polynomial} for $\delta<1 + R$ (where $R$ is the rate) using the square code.

However, there the authors use in addition puncturing to detect whether a column has weight $1$ or $2$.
After they have detected this, they turn columns of weight $2$ into columns of weight $1$ using cancellations in the columns.
Note that puncturing is a tool that does not seem to help in our situation since all our columns have weight $2$.

\subsection{Finding Cancellations}

As described in Section \ref{sec:Desc_McEliece_GRS_Weight_Two}, we are given a public code $\GRS_k(P, \mu) M$ where $M$ is an invertible matrix with row and column weight $2$.
We multiply on the right by a matrix of the form $I_n + \gamma E_{i, j}$ for $\gamma \in \F_q^\ast, 1\leq i \neq j\leq n$, where $E_{i,j}\in\mathbb F_q^{n\times n}$ is the matrix which has a one in position $i, j$ and zeros elsewhere.
We get the code 
\[
    \GRS_k(P, \mu) M (I_n + \gamma E_{i, j}).
\]
We consider now the matrix
\[
    M (I_n + \gamma E_{i, j}).
\]
This matrix has the same columns as $M$ except for the $j$-th column.
If we let $M_i$ denote the $i$-th column of $M$, then the $j$-th column of $ M (I_n + \gamma E_{i, j})$ is given by
\[
  M_j + \gamma M_i.
\]

Now there are several possibilities of what can happen:
\begin{itemize}
        \item If $M_i$ and $M_j$ have no common non-zero position, then $ M_j + \gamma M_i$ has weight $4$.
    \item 
If $M_i$ and $M_j$ have one common non-zero position, then generically $ M_j + \gamma M_i$ has weight $3$, but for a unique value of $\gamma$, it has weight $2$.
\item 
If both non-zero positions in $M_i$ and $M_j$ coincide, then $ M_j + \gamma M_i$ has generically weight $2$ and it has weight $1$ for two values of $\gamma$.
\end{itemize}
If we can somehow distinguish between several of these cases using the code $\GRS_k(P, \mu) M (I_n + \gamma E_{i, j})$, we can recover certain information about the matrix $M$.

We saw that the $j$-th column of $M (I_n + \gamma E_{i, j})$ can have weight $1, 2, 3, 4$.
The ideal situation would be if our distinguisher would allow us to distinguish between all these $4$ cases.

\begin{remark}
\label{rmk:experiments_wt_less_or_greater_distinguishing}
    In our experiments using the cube code as a distinguisher this was not the case.
    However, in the regime $n > 2 k^2 - 4k + 4$, it allowed us to distinguish between the two cases of at most $2$ and weight greater than $2$.
    The experiments gave us
    \begin{align}
\label{eq:cube_after_column_ops}
        &\dim (\GRS_k(P, \mu) M (I_n + \gamma E_{i, j}))^{(3)} \\
        \nonumber &= \begin{cases}
            2 k^2 - 4k + 4 & \text{if } M_j + \gamma M_i \text{ has weight at most $2$},\\
            2 k^2 - 4k + 5 & \text{if } M_j + \gamma M_i \text{ has weight greater than $2$}.
        \end{cases}
    \end{align}
    We will make the assumption that these are the cases we can distinguish from now on.
\end{remark}

Therefore, we can see three different cases with our distinguisher by going through all values of $\gamma \in \F_q^*$:

\begin{enumerate}
    \item There is no row in which the entries of the columns $M_i$ and $M_j$ are both non-zero.
    \item There is exactly one row where the columns $M_i$ and $M_j$ both have a non-zero entry and we know the $\gamma \in \F_q^*$ such that $M_j + \gamma M_i$ is zero in that row.
    \item The columns $M_i$ and $M_j$ have the non-zero entries in the same two rows.
\end{enumerate}

We detect Case 1. when the distinguisher gives us that for all values of $\gamma \in \F_q^*$ the weight of column $M_j + \gamma M_i$ is at least $3$.
We detect Case 2. if the distinguisher tells us that for $q-2$ values of $\gamma$ the column $M_j + \gamma M_i$ has weight at least $3$ and for a unique value of $\gamma$ the column $M_j + \gamma M_i$ has weight at most $2$.
This value is the $\gamma$ mentioned in Case $2$.
Finally, we see that we are in Case 3. if for all values of $\gamma$ the column $M_j + \gamma M_i$ has weight at most $2$.

\subsection{From Cancellation to Key Recovery}

In this section, we will show how to use the previous distinction to recover a weight two mask $\hat{M}$ and a GRS code $\hat{C}$ such that $\hat{C}\hat{M}=GRS_k(P,\mu)M$.

\begin{lemma}
\label{lemma:cycle_decomposition_matrix}
Let $M$ be a weight two matrix, i.e., a matrix where all rows and columns are of weight $2$, and let $\mathcal{G}$ be the bipartite graph with adjacency matrix $M$. Then, $\mathcal{G}$ can be decomposed into cycles.
\end{lemma}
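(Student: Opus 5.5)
The plan is to treat $\mathcal{G}$ as the bipartite graph whose vertex set is the disjoint union of a set of row vertices $r_1,\hdots,r_n$ and a set of column vertices $c_1,\hdots,c_n$. There is an edge $\{r_a,c_b\}$ exactly when $M_{a,b}\neq 0$. Because every row and every column of $M$ has exactly two non-zero entries, every vertex of $\mathcal{G}$ has degree exactly $2$. So the first step is to translate the hypothesis into the statement that $\mathcal{G}$ is a finite $2$-regular graph. Here I use that $\mathcal{G}$ is a simple graph: a row and a column meet in a single entry, so there are no multiple edges. There are no loops since the graph is bipartite.

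The second step is the standard fact that a finite $2$-regular simple graph is a disjoint union of cycles. I would prove it by looking at each connected component $K$. Since every vertex of $K$ has degree $2$, we have $|E(K)|=|V(K)|$. A connected graph with as many edges as vertices has exactly one cycle. In a $2$-regular graph this cycle cannot have any further edge attached, because that would force a vertex of degree at least $3$. Hence $K$ equals that cycle.

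A more hands-on alternative avoids the counting. Start at any vertex and keep walking along the unused incident edge. Finiteness forces a first repeated vertex, and since all degrees are $2$ this repeated vertex must be the starting one, which closes a cycle. Remove this cycle: every vertex on it had degree exactly $2$ and now has degree $0$, and every other vertex keeps degree $2$. The remaining graph is again $2$-regular on the remaining vertices, so induction on the number of edges finishes the argument.

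Finally, I would note that each cycle is bipartite and hence has even length. Its length is at least $4$, since a $2$-cycle would require a double edge. Concretely, each cycle alternates $r_{a_1},c_{b_1},r_{a_2},c_{b_2},\hdots$. This corresponds to a set of rows and columns of equal size on which $M$ restricts to a block of the form $\diag(\alpha)+\Pi\diag(\beta)$ with $\Pi$ a single cycle, up to permutation. That is presumably the form used later in the paper.

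There is no real obstacle here. The only point needing care is the first step: checking that row weight $2$ and column weight $2$ give degree exactly $2$ for every vertex, and that $\mathcal{G}$ has no multiple edges, so that the cycle closing up cannot degenerate.
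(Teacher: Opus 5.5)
Your proof is correct. The counting argument you lead with differs from the paper's proof, and your ``hands-on alternative'' is essentially the paper's proof.

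Your main route passes to connected components and uses $2|E(K)|=\sum_{v\in K}\deg v=2|V(K)|$ to see that each component is unicyclic. Two-regularity then forces the component to be exactly its cycle.

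The paper instead takes a maximal path $[v_1,\ldots,v_\ell]$ of distinct vertices, starting along a chosen edge at $v_1$. The second neighbour of $v_\ell$ must be some $v_i$ with $i\le \ell-2$. The case $i>1$ is ruled out because $v_i$ would then have the three neighbours $v_{i-1},v_{i+1},v_\ell$. This is the same idea as your first-repeated-vertex walk.

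The counting route gives disjointness and coverage at once, since the components partition $\mathcal{G}$ and each one is a single cycle. The price is relying on the standard fact that a connected graph with $|E|=|V|$ contains exactly one cycle, which rests on spanning trees having $|V|-1$ edges.

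The walk route is self-contained and constructive, and it mirrors how Algorithm~\ref{alg:cancellations} traces the cycles of the mask. However, it only shows directly that every vertex lies on some cycle. The paper's closing sentence (``each vertex is contained in a unique cycle'') leaves implicit why these cycles are disjoint and use up every edge: each vertex on the cycle already spends both of its edges on it. Your removal-and-induction step makes that explicit.

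Your check that $\mathcal{G}$ has no multiple edges is also worth keeping. The paper's argument tacitly needs it to exclude that the second neighbour of $v_\ell$ is $v_{\ell-1}$ again.
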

\begin{proof}
Choose an arbitrary vertex $v_1$ in $\mathcal{G}$ and one of the two edges incident with $v_1$ in $\mathcal{G}$, denoted by $e$, and walk on a maximal path in $\mathcal{G}$ starting from $v_1$ along $e$ such that all the vertices in the path are distinct.
Since all vertices in $\mathcal{G}$ have degree $2$ and the number of vertices is finite, such a path is unique.
We denote this path by $[v_1, \ldots, v_\ell]$.
Let $v_\ell$ be the last vertex on this path.
Since the path cannot be extended, we know that the vertices adjacent to $v_\ell$ are $v_{\ell-1}$ and $v_i$ for some $i \in [1, \ell-2]$.
But if $i > 1$, then $v_i$ would be adjacent to $v_\ell, v_{i-1}, v_{i+1}$, a contradiction.
Thus, $v_\ell$ is adjacent to $v_1$.
Therefore, each vertex of $\mathcal{G}$ is contained in a unique cycle and $\mathcal{G}$ can be decomposed into cycles.
\end{proof}

Note that the cycle decomposition of the graph $\mathcal{G}$ as in the previous lemma
corresponds to the cycle decomposition of the permutation $\Pi$ in \eqref{P} and the submatrix of $M$ corresponding to a cycle of length $2\ell$ is up to row and column permutations equal to
\begin{equation}\label{eq:matrix_for_cycle}
   \begin{pmatrix}
a_1 & a_2 & 0 & \cdots &\cdots & 0 \\
0 & a_3 & a_4 & 0  &\cdots & \vdots\\
\vdots & \ddots & \ddots & \ddots  & \ddots  & \vdots \\
\vdots & & \ddots & \ddots & \ddots  & 0\\
0 & \cdots & \cdots & 0 & a_{2\ell-3} & a_{2\ell-2} \\
a_{2\ell} & 0 & \cdots & \cdots &0 & a_{2\ell-1}
\end{pmatrix}\in\mathbb F_q^{\ell\times \ell}
\end{equation}
with $a_i\in\mathbb F_q^\ast$ for $i \in \{1, \ldots, 2\ell\}$.

We will use the distinction into Cases $1.,2.,3.$ from the previous subsection to recover the cycle structure of the weight two mask.
Note that with Case $2.$ and Case $3.$ of the previous subsection we can actually find all the cycles.
In Case $2.$ we even know that if we have an entry $a$ in column $M_j$, then in column $M_i$ in the same row there is the entry $- \frac{a}{\gamma}$.
Since in Case $3.$, which corresponds to a $4$-cycle in the corresponding graph, we cannot distinguish whether or not cancellations occur, i.e., we cannot distinguish whether $M_j + \gamma M_i$ has weight $1$ or $2$, we cannot recover the values of the non-zero entries corresponding to $4$-cycles in this way.
These values will be recovered in a later step, which we will explain in the next subsection.

First, we apply again equation \eqref{eq:monomial_change}, where we now choose the monomial matrix $A$ such that $\hat{M}$ is a weight two mask that is ordered according to the cycles and is such that for each cycle the first non-zero entry in the first row of the cycle is a $1$ and when moving along the cycle each second non-zero entry is a $1$.

We illustrate this with an example.
\begin{example}
Consider, the weight two mask
$$\begin{pmatrix}
    a & 0 & 0 & 0 & 0 & b\\
     0 & 0 & 0 & c & d & 0\\
        e & 0 & f & 0 & 0 & 0\\
    0 & 0 & g & 0 & 0 & h\\
    0 & i & 0 & j & 0 & 0\\
       0 & k & 0 & 0 & l & 0
\end{pmatrix}.$$
Then, $\hat{M}$ is of the form
$$\begin{pmatrix}
    1 & 0 & 0 & 0 & 0 & \frac{b}{a}\\
    0 & 0 & \frac{g}{h} & 0 & 0 & 1\\
    \frac{e}{f} & 0 & 1 & 0 & 0 & 0\\
    0 & 1 & 0 & \frac{j}{i} & 0 & 0\\
    0 & 0 & 0 & 1 & \frac{d}{c} & 0\\
    0 & \frac{k}{l} & 0 & 0 & 1 & 0
\end{pmatrix}$$
where the first and last three rows each correspond to a $6$-cycle.
The values $\gamma$ together with the columns $i, j$ in Equation \eqref{eq:cube_after_column_ops} that lead to cancellation are in this example
\[
    \left(- \frac{b}{a}, 1, 6 \right), \left(- \frac{g}{h}, 6,3 \right), \left(- \frac{e}{f}, 3, 1 \right), \left(- \frac{j}{i}, 2, 4 \right), \left(-\frac{d}{c}, 4, 5 \right), \left(- \frac{k}{l}, 5, 2 \right).
\]
\end{example}

So we can assume that we know half of the non-zero entries of $\hat{M}$ already because we fixed them to be $1$.
But Case $2.$ of the previous section gives us the quotient between two non-zero entries in the same row.
In this case we also get the other value.

Therefore, we recovered all of the mask $\hat{M}$ except the non-zero values corresponding to $4$-cycles.
As mentioned above we will treat these values in the next subsection.
The previous steps are summarized in Algorithm \ref{alg:cancellations}.
In the case that no $4$-cycles appear the attack is essentially finished.

\begin{algorithm}[H]
\caption{\textproc{FindCancellations}$(C_{pub}, \mathcal{D}_{cube})$}
\label{alg:cancellations}
\begin{algorithmic}[1]
\Require Public $[n,k]_q$ code $C_{pub}$, cube code distinguisher $\mathcal{D}_{cube}$
\Ensure Set of disjoint cycles $W$, partial multiplier matrix $\Gamma \in \mathbb{F}_q^{n \times n}$
\State $W \gets \emptyset$, $\Gamma \gets 0_{n \times n}$, $\text{Visited} \gets \{ \text{False} \}_{i=0}^{n-1}$

\For{$start \gets 0 \text{ to } n-1$}
    \If{$\text{Visited}[start]$} \textbf{continue} \EndIf
    
    \State $w \gets (start)$, $\text{Visited}[start] \gets \text{True}$ 
    
    \While{\textbf{true}}
        \State $curr \gets \text{last}(w)$, $prev \gets \text{second-to-last}(w) \text{ (if } |w| \ge 2 \text{ else null)}$
        
        \For{$j \gets 0 \text{ to } n-1$}
            \If{$j == curr \textbf{ or } j == prev \textbf{ or } (\text{Visited}[j] \textbf{ and } j \neq start)$} 
                \State \textbf{continue} \Comment{Skip self, immediate backtrack, and visited columns}
            \EndIf
            
            \State $\Gamma_{valid} \gets \{ \gamma \in \mathbb{F}_q^* \mid \mathcal{D}_{cube}(C_{pub} \cdot (I_n + \gamma E_{curr,j})) = \text{True} \}$
            
            \If{$|\Gamma_{valid}| == q - 1$} \Comment{All $\gamma$ cancel, implying a 4-cycle}
                \State $w \gets w \parallel (j)$, $\text{Visited}[j] \gets \text{True}$, $W \gets W \cup \{w\}$
                \State \textbf{break while} 
                
            \ElsIf{$|\Gamma_{valid}| == 1$} 
                \State $\Gamma_{curr, j} \gets \text{sole element of } \Gamma_{valid}$
                \If{$j == start$} 
                    \State $W \gets W \cup \{w\}$ 
                    \State \textbf{break while} 
                \Else
                    \State $w \gets w \parallel (j)$, $\text{Visited}[j] \gets \text{True}$
                    \State \textbf{break for} 
                \EndIf
            \EndIf
        \EndFor
    \EndWhile
\EndFor
\State \Return $(W, \Gamma)$
\end{algorithmic}
\end{algorithm}

\subsection{Treating Four Cycles in the Weight Two Mask}

In this subsection we detail how to treat $4$-cycles.
First we give a very simple attack using known techniques.
Then we detail a heuristic approach which is more in spirit with the rest of the paper.
Finally, we give a partial treatment of the probability of $4$-cycles.

\subsubsection{Basic Idea and Simple Attack on the $4$-cycles}

We saw in Lemma \ref{lemma:cycle_decomposition_matrix} that a weight two mask can be decomposed into cycles and we already assumed that the rows of $\hat{M}$ are ordered according to these cycles using \eqref{eq:monomial_change}.
To also order the columns of $\hat{M}$ according to its cycle decomposition with a block diagonal structure, we multiply the public code on the right with a permutation matrix $\hat{\Pi}$, such that we end up with an equivalent code of the form
\[
    \GRS_k(P, \mu) M \hat{\Pi}=\hat{C}\hat{M}\hat{\Pi}.
\]
This means that $\hat{M}\hat{\Pi}$ is in block diagonal form where each cycle forms a block.
Let
\[
    \hat{M} \hat{\Pi} =
    \begin{pmatrix}
        \tilde{M} & 0 & 0 & \cdots & 0\\
        0 & \tilde{M}_{1} & 0 & \cdots & 0\\
        0 & 0 & \tilde{M}_2 & \cdots & 0\\
        \vdots & \vdots & \vdots & \ddots & \vdots\\
        0 & 0 & 0 & \cdots & \tilde{M}_r
    \end{pmatrix}
\]
with $\tilde{M}$ consisting of all cycles of length larger than $4$ and the $4$-cycles $\tilde{M}_1, \ldots, \tilde{M}_r$.
Note that we have already found $\tilde{M}$.
Then, multiplying the code $GRS_k(P,\mu) M\hat{\Pi}$ with
\begin{equation}
\label{eq:inverse_partial_matrix}
    \begin{pmatrix}
        \tilde{M}^{-1} & 0 & 0 & \cdots & 0\\
        0 & I_2 & 0 & \cdots & 0\\
        0 & 0 & I_2 & \cdots & 0\\
        \vdots & \vdots & \vdots & \ddots & \vdots\\
        0 & 0 & 0 & \cdots & I_2
    \end{pmatrix}
\end{equation}
gives the code
\begin{equation}
\label{eq:multiply_with_block_diagonal}
    \hat{C} \cdot \begin{pmatrix}
        I_{n - 2 r} & 0 & 0 & \cdots & 0\\
        0 & \tilde{M}_{1} & 0 & \cdots & 0\\
        0 & 0 & \tilde{M}_2 & \cdots & 0\\
        \vdots & \vdots & \vdots & \ddots & \vdots\\
        0 & 0 & 0 & \cdots & \tilde{M}_r
    \end{pmatrix}
\end{equation}
with $\hat{C}=GRS_k(P,\mu)\cdot A^{-1}$ for some monomial matrix $A$.
The inverse of the matrix in \eqref{eq:inverse_partial_matrix} is constructed in Algorithm \ref{alg:start_mask}.

\begin{algorithm}[H]
\caption{$\textsc{PartialMask}(W_{>4}, W_4, \Gamma)$}
\label{alg:start_mask}
\begin{algorithmic}[1]
\Require Set of cycles $W_{>4}$ ($|w| > 2$), Set of 4-cycles $W_4$ ($|w| == 2$), Multiplier matrix $\Gamma \in \mathbb{F}_q^{n \times n}$
\Ensure Partial block-diagonal mask $\hat{M}_{partial} \in \mathbb{F}_q^{n \times n}$ (the inverse of the matrix in \eqref{eq:inverse_partial_matrix}), Permutation matrix $\hat{\Pi} \in \mathbb{F}_q^{n \times n}$
\State $\Delta \gets ()$ 
\For{$w \in W_{>4}$}
    \State $\Delta \gets \Delta \parallel w$ 
\EndFor
\For{$w \in W_4$}
    \State $\Delta \gets \Delta \parallel w$ 
\EndFor

\State $\hat{\Pi} \gets 0_{n \times n}$
\For{$i \gets 0 \text{ to } n-1$}
    \State $\hat{\Pi}_{i, \Delta[i]} \gets 1$ 
\EndFor

\State $\hat{M}_{partial} \gets I_n$
\State $idx \gets 0$
\For{$w \in W_{>4}$}
    \State $L \gets |w|$
    \For{$i \gets 0 \text{ to } L-1$}
        \State $curr \gets idx + i$
        \State $next \gets idx + ((i + 1) \bmod L)$
        \State $\hat{M}_{partial}[curr, next] \gets -\Gamma_{w[i], w[(i+1) \bmod L]}$
    \EndFor
    \State $idx \gets idx + L$
\EndFor
\State \Return $(\hat{M}_{partial}, \hat{\Pi})$
\end{algorithmic}
\end{algorithm}

At this point we have several ways to finish the attack (assuming not everything is a 4-cycle and we have enough non 4-cycles).
One is to run the Sidelnikov-Shestakov or a filtration attack for all possible values of the $4$-cycles.
This gives a complexity of $(q-1)^2$ times the complexity of Sidelnikov-Shestakov for each $4$-cycle.
Another possibility would be to run the attack from \cite[Subsection "Case of remaining degree-2 positions"]{couvreur2015polynomial} to recover $\tilde{M}_1, \ldots, \tilde{M}_r$ (and the GRS code).
However, we did not implement this attack.
With $\hat{\Pi}$ we can then recover all of the mask.
In Section \ref{subsubsec:heuristic_square_attack_four_cycles} we will also detail a third method given in Algorithm \ref{alg:four_cycles}, which is more in the spirit of the rest of the paper.
But first, let us give an overview of the whole attack.

In summary, we go the following route, where arrows denote which code we are able to obtain from the preceding code:
\begin{align*}
    \begin{tikzcd}
        \GRS_k(P, \mu) M
        \arrow[d]\\
        \GRS_k(P, \mu) M \hat{\Pi}
        \arrow[d]\\
        \GRS_k(P, \mu) A^{-1}\begin{pmatrix}
        I_{n - 2 r} & 0 & 0 & \cdots & 0\\
        0 & \tilde{M}_{1} & 0 & \cdots & 0\\
        0 & 0 & \tilde{M}_2 & \cdots & 0\\
        \vdots & \vdots & \vdots & \ddots & \vdots\\
        0 & 0 & 0 & \cdots & \tilde{M}_r
    \end{pmatrix}
    \arrow[d]\\
    \GRS_k(P, \mu) A^{-1}.
    \end{tikzcd}
\end{align*}
The permutation $\hat{\Pi}$ we find using Algorithms \ref{alg:cancellations} and \ref{alg:start_mask}.
For the second step, we also use our attack and for the last any of the three possibilities of Sidelnikov-Shestakov for each of the $4$-cycles, the attack from \cite{couvreur2015polynomial} or the attack we detail next in Section \ref{subsubsec:heuristic_square_attack_four_cycles}.

\begin{remark}
\label{rmk:extend_attack_shortening}
    So far our attack is only applicable in the regime where $n > 2 k^2 - 4k + 4$.
    Using shortening, it seems reasonable to expect to be able to extend the attack to more parameters.
    This is, however, not straightforward.
    Remark \ref{rmk:experiments_wt_less_or_greater_distinguishing} no longer holds true in this setting.
    One would need a more intricate handling of the situation, for example, doing several column operations at once instead of only one.
\end{remark}

\subsubsection{A New Heuristic Way of Treating $4$-cycles}
\label{subsubsec:heuristic_square_attack_four_cycles}

This more heuristic, but simpler, attack is as follows.
We multiply each unknown $4$-cycle $(\begin{smallmatrix}
    1 & a\\
    b & 1
\end{smallmatrix})$ by a matrix of the form
$(\begin{smallmatrix}
    1 & \kappa\\
    \lambda & 1
\end{smallmatrix})$.
We are interested in the case where the product has exactly $2$ non-zero entries, and the second matrix $(\begin{smallmatrix}
    1 & \kappa\\
    \lambda & 1
\end{smallmatrix})$ is invertible.
The second matrix is invertible if and only if $\lambda \neq \frac{1}{\kappa}$.
Note that we have
\[
    \begin{pmatrix}
        1 & a\\
    b & 1
    \end{pmatrix}
    \begin{pmatrix}
        1 & \kappa\\
    \lambda & 1
    \end{pmatrix}
    = \begin{pmatrix}
        1 + a \lambda & \kappa + a\\
        b + \lambda & b \kappa + 1
    \end{pmatrix} =: \Lambda.
\]
Now, we want to study how zero entries in the resulting matrix $\Lambda$ can arise.
Note that the product of matrices also has to be invertible.
Assume that a diagonal entry of $\Lambda$ is zero, without loss of generality,
\[
    1 + a \lambda = 0
\]
or equivalently $\lambda = - \frac{1}{a}$.
In this case, the off-diagonal entries have to be non-zero in order for the product $\Lambda$ to be invertible.
Therefore, if we want exactly $2$ non-zero entries in $\Lambda$, we also need to have $b \kappa + 1 = 0$ and thus $\kappa = - \frac{1}{b}$.
Now, assume an off-diagonal entry of $\Lambda$ is zero.
By invertibility, again, the diagonal entries are non-zero.
To get exactly two non-zero entries in $\Lambda$ we have $\kappa = - a$ and $\lambda = - b$.

This implies that if we can detect which values for $\kappa$ and $\lambda$ lead to the case of exactly $2$ non-zero elements in $\Lambda$, we can recover $a, b$ up to order.

In Algorithm \ref{alg:four_cycles}, for each $i\in\{1,\hdots,r\}$, we multiply the code 
$$C' := \GRS_k(P, \mu) A^{-1}\begin{pmatrix}
        I_{n - 2 r} & 0 & 0 & \cdots & 0\\
        0 & \tilde{M}_{1} & 0 & \cdots & 0\\
        0 & 0 & \tilde{M}_2 & \cdots & 0\\
        \vdots & \vdots & \vdots & \ddots & \vdots\\
        0 & 0 & 0 & \cdots & \tilde{M}_r
    \end{pmatrix}$$
with a block diagonal matrix of the form
$$\begin{pmatrix}
        I_{n - 2(r-i+1) } & 0 & 0\\
         0 & \begin{array}{cc}
        1 & \kappa_i\\
    \lambda_i & 1
    \end{array} & 0 \\
   0 & 0 &  I_{2(r-i)}\\
    \end{pmatrix}$$
to obtain the code $C_i$ and apply the square-code as a distinguisher to $C_i$ in order to find the values for $\kappa_i$ and $\lambda_i$ such that $\tilde{M}_i\cdot \begin{pmatrix}
        1 & \kappa_i\\
    \lambda_i & 1
    \end{pmatrix}$ has exactly two non-zero entries.
\begin{remark}
    The heuristic we used for the square code  was the following:
    \begin{align*}
        &\dim (C_i)^{(2)}
    = \begin{cases}
        \dim (C')^{(2)} - 1 & \text{if there are exactly two non-zero entries in } \tilde{M}_i\cdot \begin{pmatrix}
        1 & \kappa_i\\
    \lambda_i & 1
    \end{pmatrix},\\
    \dim (C')^{(2)} & \text{otherwise.}
    \end{cases}
    \end{align*}
    This always worked in all our experiments from Section \ref{subsec:experiments}.
\end{remark}

\begin{remark}
    To decrease the complexity of the square code distinguisher, one could,  in addition, puncture $C_i$ in positions that get multiplied with an identity matrix in equation \eqref{eq:multiply_with_block_diagonal} as long as the length of the resulting code is still large enough to be square code distinguishable.
\end{remark}

\begin{algorithm}[H]
\caption{$\textsc{CompleteMask}(C_{base}, W_4, \hat{M}_{partial}, \mathcal{D}_{sq})$}
\label{alg:four_cycles}
\begin{algorithmic}[1]
\Require Base $[n,k]_q$ code $C_{base} := C_{pub} \cdot \hat{\Pi} \cdot \hat{M}_{partial}^{-1}$, $W_4$, Partial mask $\hat{M}_{partial} \in \mathbb{F}_q^{n \times n}$, Square distinguisher $\mathcal{D}_{sq}$
\Ensure Complete mask $\hat{M}_{\hat{\Pi}} := \hat{M} \hat{\Pi} \in \mathbb{F}_q^{n \times n}$
\State $\hat{M}_{\hat{\Pi}} \gets \hat{M}_{partial}$
\State $idx \gets n - 2 \cdot |W_4|$

\For{$w \in W_4$}
    \State $resolved \gets \text{False}$
    \For{$\kappa \in \mathbb{F}_q^*$}
        \For{$\lambda \in \mathbb{F}_q^* \setminus \{\kappa^{-1}\}$}
            \State $T \gets I_n + \kappa E_{idx, idx+1} + \lambda E_{idx+1, idx}$
            \If{$\mathcal{D}_{sq}(C_{base} \cdot T) == \text{True}$}
                \State $\hat{M}_{\hat{\Pi}}[idx, idx+1] \gets -\kappa$
                \State $\hat{M}_{\hat{\Pi}}[idx+1, idx] \gets -\lambda$
                \State $resolved \gets \text{True}$
                \State \textbf{break for}
            \EndIf
        \EndFor
        \If{$resolved$} \textbf{break for} \EndIf
    \EndFor
    \State $idx \gets idx + 2$
\EndFor
\State \Return $\hat{M}_{\hat{\Pi}}$
\end{algorithmic}
\end{algorithm}

Note that by reordering rows and multiplying with a non-zero scalar, it does not matter if we consider the mask with $4$-cycle
\[
    \begin{pmatrix}
        1 & a\\
        b & 1
    \end{pmatrix}
\]
or the mask with $4$-cycle
\[
    \begin{pmatrix}
        1 & \frac{1}{b}\\
        \frac{1}{a} & 1
    \end{pmatrix}.
\]
Therefore, both solutions that we can find for $\kappa$ and $\lambda$ work.

Using this, the overall algorithm is given in the following.

\begin{algorithm}[H]
\caption{$\textsc{KeyRecovery}(C_{pub}, \mathcal{D}_{cube}, \mathcal{D}_{sq})$}
\label{alg:full_attack}
\begin{algorithmic}[1]
\Require Public $[n,k]_q$ code $C_{pub}$, Distinguishers $\mathcal{D}_{cube}, \mathcal{D}_{sq}$
\Ensure Secret generator matrix $\hat{G}_{sec}$, Equivalent mask $\hat{M} \in \mathbb{F}_q^{n \times n}$

\State $(W, \Gamma) \gets \Call{FindCancellations}{C_{pub}, \mathcal{D}_{cube}}$
\State $W_{>4} \gets \{ w \in W \mid |w| > 2 \}$
\State $W_4 \gets \{ w \in W \mid |w| == 2 \}$

\State $(\hat{M}_{partial}, \hat{\Pi}) \gets \Call{PartialMask}{W_{>4}, W_4, \Gamma}$
\State $C_{base} \gets C_{pub} \cdot \hat{\Pi} \cdot \hat{M}_{partial}^{-1}$

\State $\hat{M}_{\hat{\Pi}} \gets \Call{CompleteMask}{C_{base}, W_4, \hat{M}_{partial}, \mathcal{D}_{sq}}$

\State $C_{GRS} \gets C_{pub} \cdot \hat{\Pi} \cdot \hat{M}_{\hat{\Pi}}^{-1}$
\State $\hat{G}_{sec} \gets \Call{SidelnikovShestakov}{C_{GRS}}$ 
\State $\hat{M} \gets \hat{M}_{\hat{\Pi}} \cdot \hat{\Pi}^{T}$ 

\State \Return $(\hat{G}_{sec}, \hat{M})$
\end{algorithmic}
\end{algorithm}

\subsubsection{Probabilities of $4$-cycles}

In our experiments we had some $4$-cycles, but not many.
A more thorough understanding of the probability of $4$-cycles would be beneficial.
A first step is given by the following lemma.

\begin{lemma}
 If the bipartite graph with adjacency matrix $M\in\mathbb F_q^{n\times n}$ can be decomposed into $N$ cycles of lengths $2\ell_1,\hdots 2\ell_N$ with $\ell_1+\hdots+\ell_N=n$, then the probability that $M$ is invertible is $\left(1-\frac{1}{q-1}\right)^N$.
\end{lemma}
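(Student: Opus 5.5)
The approach is to reduce to a single cycle by block structure, and then compute the determinant of one cycle block explicitly. The randomness model is the natural one: the positions of the non-zero entries (hence the cycle decomposition) are fixed, and the $2n$ non-zero entries are drawn independently and uniformly from $\F_q^\ast$.

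\emph{Reduction to one cycle.} By Lemma \ref{lemma:cycle_decomposition_matrix} and the remark following it, after suitable row and column permutations $M$ becomes block diagonal. Each block corresponds to one cycle of length $2\ell_m$ and has the form \eqref{eq:matrix_for_cycle}. Permutations do not affect invertibility, so $M$ is invertible if and only if every block is invertible. Distinct blocks use disjoint sets of entries, so these events are independent. The probability is therefore the product over the $N$ blocks, and it suffices to show that a single block of the form \eqref{eq:matrix_for_cycle} is invertible with probability $1-\frac{1}{q-1}$.

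\emph{Determinant of a cycle block.} For the $\ell\times\ell$ matrix in \eqref{eq:matrix_for_cycle}, only two permutations contribute non-zero terms to the determinant: the identity and the cyclic shift. Indeed, each row $r$ has non-zero entries only in columns $r$ and $r+1$ (indices mod $\ell$), and once one row picks its shifted column, the choices of all other rows are forced. This gives
\[
\det = a_1a_3\cdots a_{2\ell-1} \pm a_2a_4\cdots a_{2\ell},
\]
where the sign is $(-1)^{\ell-1}$. The case $\ell=1$ does not occur, because $\Pi$ has no fixed points. The case $\ell=2$ can be checked directly: $a_1a_3-a_2a_4$.

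\emph{Counting.} Condition on all entries except $a_{2\ell}$. The determinant vanishes exactly when
\[
a_{2\ell} = \mp\, a_1a_3\cdots a_{2\ell-1}/(a_2\cdots a_{2\ell-2}).
\]
The right-hand side is one specific element of $\F_q^\ast$, and $a_{2\ell}$ is uniform on $\F_q^\ast$. So the conditional probability of singularity is $\frac{1}{q-1}$, and averaging over the other entries gives the same value. The block is therefore invertible with probability $1-\frac{1}{q-1}$. Multiplying over the $N$ independent blocks gives $\left(1-\frac{1}{q-1}\right)^N$.

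The only real subtlety is the determinant computation, namely checking that exactly two permutations survive. This holds because the support of a cycle block is the union of two permutation patterns, the diagonal and the cyclic shift, and any other permutation would have to pick a zero entry in some row. I expect the main obstacle to be stating the probability model precisely (uniform independent non-zero entries with fixed support), since the statement leaves it implicit.
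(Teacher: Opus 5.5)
Your proposal is correct and follows essentially the same route as the paper. Both block-diagonalize along the cycle decomposition, compute the determinant of a $2\ell$-cycle block as $\prod_{i \text{ odd}} a_i + (-1)^{\ell+1}\prod_{i \text{ even}} a_i$, observe that exactly one value of the last entry makes it vanish (the paper phrases this as counting $(q-1)^{2\ell-1}$ singular tuples), and multiply over the $N$ independent blocks. Your explicit probability model (fixed support, independent uniform non-zero entries) and your argument that only the identity and the cyclic shift contribute to the determinant fill in details the paper leaves implicit.
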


\begin{proof}
Assume that $M$ is in block diagonal form where the blocks correspond to the cycles of its cycle decomposition according to Lemma \ref{lemma:cycle_decomposition_matrix} and Equation \eqref{eq:matrix_for_cycle}.
The determinant of the matrix in \eqref{eq:matrix_for_cycle} representing a $2\ell$-cycle is equal to 
\begin{align}\label{det}
\prod_{i=1,\ i\ \text{odd}}^{2\ell}a_i + (-1)^{\ell + 1} \prod_{i=1,\ i\ \text{even}}^{2\ell}a_i
\end{align}
with $a_i\neq 0$ for $i=1,\hdots,2\ell$. 
There are $(q-1)^{2\ell-1}$ possibilities for $a_1,\hdots,a_{2\ell}$ such that \eqref{det} is equal to zero, since we can choose all $a_i$ except one arbitrarily from $\mathbb F_q^\ast$ and then the last of the $a_i$ is fixed. Hence, the probability that a matrix corresponding to a $2\ell$-cycle has full rank is equal to $1-\frac{1}{q-1}$. Therefore, if the bipartite graph corresponding to $M$ can be decomposed into $N$ cycles, the probability that $M$ has full rank is equal to $\left(1-\frac{1}{q-1}\right)^N$.
\end{proof}

An application of Bayes' rule gives the probability of $4$-cycles.
However we have not found good bounds yet.

\subsection{Complexity}

All matrices in our algorithms have at most $n$ rows and columns.
Therefore, matrix multiplication and computing the rank can be done in $O(n^3)$.

The complexity of the attack in Algorithm \ref{alg:full_attack} can be computed as follows.
Algorithm \ref{alg:full_attack} employs the other algorithms and some matrix multiplications.
The main part of the complexity for reasonable values of $q$ comes from Algorithm \ref{alg:cancellations}.
Here, we have two for-loops of length $n$.
We also have a while-loop, but this only skips certain steps in the first for-loop.
We also need to use $q-1$ many times the cube code distinguisher.
For the cube code we need to compute the rank, what can be done in $O(n^3)$.
Therefore, everything in Algorithm \ref{alg:cancellations} can be done with $O(n^5 q)$ operations (technically over $\F_q$, so the bit complexity is slightly higher).
The complexity of Algorithm \ref{alg:start_mask} is $O(n)$.
Assuming that the number of $4$-cycles is constant, the complexity of Algorithm \ref{alg:four_cycles} is $q^2$ times the complexity of computing the rank, i.e., in total $O(q^2 n^3)$.
The matrix multiplications are in $O(n^3)$ and the Sidelnikov-Shestakov attack is in $O(k^3 + k^2 n)$, which in our case is $O(n^2)$ and thus, the overall complexity of Algorithm \ref{alg:full_attack} is $O(q n^5 + q^2 n^3)$.

This complexity is similar to the complexities of \cite{couvreur2014distinguisher, couvreur2015polynomial}.

It would be good to reduce the complexity in order to be able to run more experiments since in our setting $q n^5$ is already of the order of $k^{12}$.

\subsection{Experiments}
\label{subsec:experiments}

We ran the attack successfully on the following sets of parameters for $(n, k, q)$ (10 experiments each).
\begin{align*}
    (60, 6, 61), (60,6,64), (85, 7, 89), (90,7,121), (110, 8, 113), (112, 8, 125),\\
    (150, 9, 151), (180, 10, 181), (220, 11, 223), (255, 12, 257), (300, 13, 307).
\end{align*}

A toy implementation can be found here:
\url{https://git.math.uzh.ch/abmazu/an-attack-on-high-rate-mceliece-cryptosystems-using-generalized-reed-solomon-codes-with-weight-2-mask}.
For the larger parameter sets we parallelized the algorithm.

\section{Outlook and Open Problems}
\label{sec:outlook}

We divide this section into heuristics which we have used, but not proven for our distinguisher and attack on the one hand and open questions about more general attacks on the other hand.

\subsection{Unproven Assumptions}

We give a list of the assumptions that underlie our attacks but are not proven.
It might be worthwhile to try to prove all of them.
\begin{itemize}
    \item The dimension of the cube code of a random code of dimension $k$ and length $n$ is with high probability equal to $$\min\left\{\binom{k+2}{3},n\right\}.$$
    \item The dimension of $S_k^{(3)}$ over $\mathbb F_q$ is equal to $2k^2-4k+4$.
    \item The dimension of $(\GRS_k(P, \mu)M)^{(3)}$ is $2k^2-4k+4$.
    In particular the evaluation of $S_k^{(3)}$ at $(P_1, P_{\Pi(1)}, \nu_1), \ldots, (P_n, P_{\Pi(n)}, \nu_n)\in\mathbb F_q^3$ is injective.
    \item Equation \eqref{eq:cube_after_column_ops} holds.
    \item The probability of having many $4$-cycles is small.
    \item The heuristic Algorithm \ref{alg:four_cycles} always works.
\end{itemize}

\subsection{Other Distinguishers}

It is worth emphasizing again that we give a general framework that turns a good enough distinguisher into a key-recovery attack and for high rate use the cube code as distinguisher.
This immediately leads to the question if we can use other distinguishers here.

A first question is whether Betti numbers \cite{Randriambololona25} can help to extend the attack to smaller rates.
We guess that the answer will be a little, but not much.
The reason is as follows.
Let $C$ be a code of regularity $2$, i.e., with $C^{(2)} = \F_q^n$.
Then in \cite{Randriambololona25} it was shown that all Betti numbers except for $\beta_{1, 3}$ are determined by $I_2(C)$, the kernel of the evaluation map of quadratic forms.
So far, we have no evidence that there are special relations in $I_2(C)$ or that in this case the Betti numbers behave differently from random.
In this work we have dealt with the case of codes of regularity $\geq 4$, i.e., $C^{(3)} \neq \F_q^n$.
We think that the distinguisher and attack can be extended to codes of regularity $3$, i.e., codes with $C^{(2)} \neq \F_q^n$, but $C^{(3)} = \F_q^n$ using $\beta_{1, 4}, \beta_{2, 3}$.

Also other distinguishers such as the higher-order vanishing distinguisher \cite{hemmertwiemers25higherordervanishing} do not seem to be directly applicable.

\subsection{Open Problems}

Here we mention some open problems.
The first questions concern the range of applicability of our attack.
\begin{enumerate}
    \item The distinguisher can be improved using shortening. Is this also true for the attack? See Remark \ref{rmk:extend_attack_shortening}.
    \item Explain Remark \ref{rmk:sum_not_direct} and find the exact asymptotics of the number of positions in which we can shorten and the asymptotics of the distinguisher using shortening.
    \item Even with shortening the distinguisher applies only to a limited range of parameters.
    Are there other ways to attack in the lower rate regime?
    For example, is it possible to find subcodes $\GRS_{k'}(P, \mu) M$ where $k' < k$?
    In this case we could use the high rate attack on $\GRS_{k'}(P, \mu) M$.
    \item Related to the previous point: Can we use the cycle structure and Lemma \ref{lemma:J_eq_PJ_shorten_is_wt_2_masked} to extend the range of the attack?
    Assume we have shortened in a certain number of positions $J$ and for simplicity assume that the set $J \cup \Pi(J)$ lies within a single cycle.
    Further assume that after shortening in $J$ we are in the distinguishable regime.
    Then it seems highly likely that shortening in further positions we can detect using equations \eqref{eq:shortening_subcode_complement} and \eqref{eq:cube_of_shortened_code} whether the new positions are in $\Pi(J)$ or not.
    In this way one could obtain certain information about the cycle structure of the matrix $M$ even if one would be far from recovering all of $M$.
    \item Can we use other distinguishers as mentioned in the previous subsection?
    \item In this work we only attacked \cite{weger16} in the high rate regime.
    Is it possible to also attack schemes which have similar assumptions such as \cite{almeida2024convolutional, bolkema2017variations, Dyseryn26}, in the high rate regime?
    \item Can we also attack higher weight masked schemes with average weight $\delta > 2$?
    We know that in the high rate regime a sufficiently large power code will yield a distinguisher by Lemma \ref{lemma:higher_weight_masking_higher_power_codes} and if our heuristics about cancellation still hold, we can use it in the same way as before to recover information about the mask $M$.
    However, actual key recovery will be more complicated because the structure of $M$ is less straightforward.
    \item In the BBCRS scheme \cite{baldi2016enhanced} the authors used as a mask the sum of a weight two matrix and a matrix of rank $1$.
    Can we also break such a scheme in the high rate regime?
    \item In \cite{bolkema2017variations} the authors also propose to use AG codes with a weight $2$ mask.
    It seems very likely that the distinguisher and possibly also the attack will apply for general AG codes in the high rate regime.
\end{enumerate}

While some of the open problems seem relatively straightforward and some very hard we encourage the community to investigate all of them to have a more thorough understanding of the security of these problems.

\section*{Acknowledgements}

This work has been supported in part by the Swiss National Science Foundation under SNSF grant number 212865, and by the German research foundation, project number 513811367.

\bibliography{literature}
\bibliographystyle{plain}

\end{document}